\def\Split{true}
\newtheorem{theorem}{Theorem}
\newtheorem{lemma}[theorem]{Lemma}
\newtheorem{proposition}[theorem]{Proposition}
\newtheorem{claim}[theorem]{Claim}
\newtheorem{remark}[theorem]{Remark}
\newtheorem{example}[theorem]{Example}
\theoremstyle{definition}
\newtheorem{definition}[theorem]{Definition}
\newcommand{\F}{{\mathbb{F}}}
\newcommand{\N}{\mathbb{N}}
\newcommand{\Z}{\mathbb{Z}}
\newcommand{\R}{\mathbb{R}}
\newcommand{\C}{\mathbb{C}}
\DeclareMathOperator*{\E}{\mathbb{E}}
\newcommand{\half}{\text{\sf half}}
\newcommand{\LSB}{\text{LSB}}
\newcommand{\MSB}{\text{MSB}}
\newcommand{\famly}{\mathscr{J}}
\newcommand{\noit}{\operatorname}
\def\wp{\omega_p}
\newcommand{\bit}{\text{bit}}
\def\moverlay{\mathpalette\mov@rlay}
\def\mov@rlay#1#2{\leavevmode\vtop{%
   \baselineskip\z@skip \lineskiplimit-\maxdimen
   \ialign{\hfil$\m@th#1##$\hfil\cr#2\crcr}}}
\newcommand{\charfusion}[3][\mathord]{
    #1{\ifx#1\mathop\vphantom{#2}\fi
        \mathpalette\mov@rlay{#2\cr#3}
      }
    \ifx#1\mathop\expandafter\displaylimits\fi}
\newcommand{\cupdot}{\charfusion[\mathbin]{\cup}{\cdot}}
\titleformat{\section}{\large\scshape\centering}{\thesection.}{1em}{}
\titleformat{\subsection}{\normalsize\bfseries}{\thesubsection}{1em}{}
\def\_{{\tt \char`\_ }}
\begin{document}
\pagestyle{plain} 
\lineskip=1.8pt\baselineskip=18pt\lineskiplimit=0pt 

\title{Finding Significant Fourier Coefficients: Clarifications, Simplifications, Applications and Limitations}

\author{Steven D. Galbraith, Joel Laity and Barak Shani \\ \small{Department of Mathematics, University of Auckland, New Zealand}}
\date{}
\maketitle

\begin{abstract}
    Ideas from Fourier analysis have been used in cryptography for the last three decades. Akavia, Goldwasser and Safra unified some of these ideas to give a complete algorithm that finds significant Fourier coefficients of functions on any finite abelian group. Their algorithm stimulated a lot of interest in the cryptography community, especially in the context of ``bit security''. This manuscript attempts to be a friendly and comprehensive guide to the tools and results in this field.
    The intended readership is cryptographers who have heard about these tools and seek an understanding of their mechanics and their usefulness and limitations.
    A compact overview of the algorithm is presented with emphasis on the ideas behind it. We show how these ideas can be extended to a ``modulus-switching'' variant of the algorithm. We survey some applications of this algorithm, and explain that several results should be taken in the right context. In particular, we point out that some of the most important bit security problems are still open. Our original contributions include: a discussion of the limitations on the usefulness of these tools; an answer to an open question about the modular inversion hidden number problem.

    \bigskip
    \noindent \textbf{Keywords}: Significant Fourier transform, Goldreich--Levin algorithm, Kushilevitz--Mansour algorithm, bit security of Diffie--Hellman.
\end{abstract}

\newpage
\tableofcontents
\lineskip=1.8pt\baselineskip=18pt\lineskiplimit=0pt 

\section{Introduction}

Let $G$ be a finite abelian group.
Fourier analysis provides a convenient basis for the space of functions $G \to \C$,  namely the characters $\chi: G \to \C$. It follows that any function $f: G \to \C$ can be represented as a linear combination $f(x) = \sum_{\alpha \in G} \widehat{f}(\alpha) \chi_\alpha(x)$, where $\widehat{f}$ is the discrete Fourier transform of $f$. A standard problem is to approximate a function, up to any error term, using a linear combination of a small number of characters. This is not always possible, but for certain functions (which are called \emph{concentrated}) it is possible. The coefficients in such an approximation are called \emph{significant} Fourier coefficients, as their size is large relative to the function's norm.
The simplest example of a concentrated function is a character itself.

A natural computational problem is to compute such an approximation.
When doing this one might have a complete description of the function or, as will be the case in this paper, just a small set of values $f(x_i)$. The ability to choose specific $x_i$'s plays a crucial role in the ability to approximate $f$. Indeed, the main result in this subject is an algorithm that, given the ability to select the values $x_i$, efficiently computes a \emph{sparse} approximation for any concentrated function on any abelian group $G$, by computing all its significant coefficients. On the other hand, when the $x_i$'s cannot be selected, such an algorithm is not known to exist in general. Furthermore it is conjectured that an efficient algorithm does not exist in the general case.

We use the general term \emph{significant Fourier transform} (SFT) to refer to algorithms that compute a function's significant coefficients. SFT algorithms first appear explicitly in the work of 
Kushilevitz and Mansour~\cite{KM}, though some of the main ideas already appear in earlier works.
Subsequently new algorithms were presented, in various special cases of groups or functions, until the work of Akavia, Goldwasser and Safra~\cite{AGS} who presented a generic algorithm for all finite abeliean groups and all complex-valued functions.
The algorithms in the literature are often presented very differently, and some of them are designed to fulfill a very particular task, but they are all based on the same mathematical principles.

The main aim of this paper is to present a complete study of the SFT algorithms. Our work unifies these algorithms by clarifying the core mathematics underlying them. Thus, our focus is on a broad mathematical overview using Fourier analysis on finite groups and elementary group theory. We remark that our work is not necessarily the best presentation of a specific SFT algorithm, but we believe that a reader who is interested in understanding the rules and framework of these algorithms would benefit from this work. Our study also leads to a new approach for some of the more complicated cases. Furthermore, this paper surveys applications of the SFT algorithm in the field of cryptography and also gives limitations for such applications.

The SFT algorithm and variants have received great attention in the literature outside the regime of cryptography.
Researchers in engineering, concerned with practical applications in signal processing, have developed algorithms with greater efficiency (with respect to various metrics); for a recent survey on these algorithms see Gilbert, Indyk, Iwen and Schmidt~\cite{Fourier_survey}. Our work does not cover these developments.

\vskip 0.2cm

\noindent \textbf{Roadmap}

Section~\ref{sec:basics} summarises the basic definitions. Section~\ref{sec_algorithm} presents the key ideas behind the SFT algorithm, and deals with some related issues. Specifically, with few a examples we explain why being able to choose the inputs to the functions is essential and why one does not expect to have a similar tool when the inputs to the functions are chosen at random; In cases where the function values are given by an oracle, we analyze the case of working with unreliable oracles.

Section~\ref{sec:history} reviews the development of ideas and highlights the contributions of
Goldreich and Levin~\cite{GL}, Kushilevitz and Mansour~\cite{KM}, Mansour~\cite{Mansour92}, Bleichenbacher~\cite{Bleichenbacher} and Akavia, Goldwasser and Safra~\cite{AGS}.

In Section~\ref{sec:mod-switch} we outline our recent work~\cite{LS} on applying \emph{modulus switching} to this subject (namely to re-cast a function on $\Z_p$ to a function on $\Z_{2^n}$ for the nearest power of $2$ to $p$).
These ideas are very similar to the approach taken in Shor's (period-finding) algorithm~\cite{Shor}.
The benefit of this new approach is twofold. Firstly, its analysis gives insights into the AGS algorithm. Secondly it provides a new approach for implementations and for proving concentration of functions. In particular we provide a new proof of a result by Morillo and R{\` a}fols~\cite{MR} (described in Section~\ref{sec:i-th-bit}). 

The SFT algorithm is a useful tool in the research area of bit security.
Section~\ref{sec_applications} surveys bit security applications using the language of the \emph{hidden number problem}: given $f$ and oracle access to $f_s := f \circ \varphi_s$, for some function $\varphi$ parameterized by an unknown value $s$, recover the value $s$.
The main application is in the group $G=\Z_p$ for the particular function $\varphi_s(x) = sx \pmod{p}$, i.e. $f_s := f(s x)$. In this particular case the \emph{scaling property} gives $\widehat{f_s}(\alpha) = \widehat{f}(\alpha s^{-1})$ for every $\alpha \in G$. It follows that $f$ and $f_s$ share the same coefficients in different order. If $\alpha$ is a significant Fourier coefficient of $f$ and $\beta$ is a significant Fourier coefficient of $f_s$ then $\alpha \beta^{-1}$ is a candidate value for $s$.

Using this observation, Akavia, Goldwasser and Safra~\cite{AGS} showed that a number of bit security results (for RSA, Rabin, and discrete logs) can be re-proved using these tools.
A classic result of this type, from Alexi, Chor, Goldreich and Schnorr (ACGS) \cite{ACGS}, is that if one has an oracle that on input $x^e \pmod{N}$ (where $(N,e)$ is an RSA public key) returns the least significant bit of $x$ with probability noticeably better than $\tfrac{1}{2}$, then one can compute $e$-th roots modulo $N$.
H\r{a}stad and N{\"a}slund \cite{HN} generalized this result for an oracle that returns any single bit of $x$ (see also \cite[Section 4.1]{HardCoreSurvey}), but their method is very complex and requires complicated and adaptive manipulations of the bits.
On the other hand, the algorithm given by AGS, which applies to functions with significant Fourier coefficients, is much clearer and is not adaptive.\footnote{We describe the notion of adaptiveness in Section \ref{sec_applications}.} Similar to H\r{a}stad and N{\"a}slund, Morillo and R\`{a}fols~\cite{MR} extended the AGS results to all single bit functions, by showing that each single bit function is concentrated and so has a significant Fourier coefficient (in particular, one can obtain the ACGS result for any bit).
The SFT algorithm has also been used to show \emph{search-to-decision} reductions for the \emph{learning with errors} and \emph{learning with rounding} problems~\cite{MM,LWR_reduction}.

Subsequently, a number of papers~\cite{Duc,Fazio,MVHNP,WZZ} have proved (or re-proved) various results on bit security in the context of Diffie--Hellman keys on elliptic curves and finite fields $\F_{p^n}$ with $n > 1$, but these results consider an unconventional model that allows changing the curve or field representation.
We emphasize that the requirement of chosen inputs for the functions restricts these applications. Indeed, the question of main interest, whether single bits of Diffie--Hellman shared keys are hardcore in a fixed representation, is still open. We elaborate on these applications in Section \ref{sec_applications}.

Section~\ref{sec_limitations} explains a fundamental limitation to the approach described above: we prove that one can only solve the (chosen-multiplier) hidden number problem with these tools when the function $\varphi_s$ is linear or affine.
Therefore, these tools cannot be directly used to address the \emph{elliptic curve hidden number problem} or the \emph{modular inversion hidden number problem}. Our work therefore answers a question in~\cite{MIHNP}.

\section{Preliminaries} \label{sec:basics}

The following gives mathematical background needed to understand the paper and definitions that will be used throughout the paper. The main definitions and notation appear in the table in Section~\ref{sec:table}.

\subsection{Fourier analysis on finite groups}

We review basic background on Fourier analysis on discrete domains. Proofs and further details can be found in Terras~\cite{Terras}.

Let $(R, +, \cdot)$ be a finite ring and denote by $G:=(R,+)$ the corresponding additive abelian group. We are interested in the set of functions $L^2(R) := \{ f:R \rightarrow \C \}$. The set $L^2(R)$ is a vector space over $\C$ of dimension $\abs{R}$, with the usual pointwise addition and scalar multiplication of functions.
Convolution of two functions $f, g\in L^2(R)$ is defined by $(f * g)(x) = \frac{1}{\abs{R}}\sum_{y\in R}f(x-y)g(y)$.
The \emph{expectation} of a function $f \in L^2(R)$ is defined to be $\E\left[f\right] = \frac{1}{\abs{R}} \sum_{x \in R} f(x)$.
The space $L^2(R)$ is equipped with an inner product $\langle f, g \rangle := \E\left[ f(x) \overline{g(x)} \right] = \frac{1}{\abs{R}} \sum_{x \in R} f(x) \overline{g(x)}$, where $\overline{z}$ denotes the complex conjugate of $z\in \C$. The inner product induces a norm $\|f\|_2 = \sqrt{\langle f, f\rangle}$.
We also define $\| f \|_\infty = \max_{x \in R} |f(x)|$.

One basis for this vector space is the set of \emph{Kronecker delta functions} $\{\delta_i\}_{i \in R}$ $\left(\delta_i(j) = 1\right.$ if $j=i$, otherwise $\left. \delta_i(j) = 0\right)$.
This is an orthogonal basis with respect to the inner product.
However, this basis is not as useful as the Fourier basis, as we will explain later in this section.

A \emph{character} of an additive group $G$ is a group homomorphism taking values in the non-zero complex numbers, namely $\chi: G \rightarrow \C^*$ such that $\chi(x+y) = \chi(x) \chi(y)$. Since $\chi(x)^{\abs{G}} = \chi(\abs{G} x) = \chi(0_G) = 1$, we see that the characters take values in the complex $\abs{G}$-th roots of unity. The set of characters of $G$ forms a group (with respect to pointwise multiplication), isomorphic to $G$, which is often denoted $\widehat{G}$.

In general, we fix a choice of isomorphism $G \to \widehat{G}$ and denote it by $\alpha \mapsto \chi_\alpha$.
In particular, for $G=\Z_N$ the characters are defined by $\chi_\alpha (x) := \mathrm{e}^{\frac{2\pi i}{N} \alpha x}$ where $\alpha\in G$. For $G = \Z_{N_1} \times \ldots \times \Z_{N_m}$, let $\boldsymbol{\alpha}= \left(\alpha_1,\dots,\alpha_m \right)$ and $\boldsymbol{x} = (x_1,\dots,x_m)$; the character $\chi_{\boldsymbol{\alpha}}$  is given by $\chi_{\boldsymbol{\alpha}}(\boldsymbol{x}) :=  \chi_{\alpha_1}(x_1) \cdot \ldots \cdot \chi_{\alpha_m}(x_m) = \mathrm{e}^{\frac{2\pi i}{N_1} \alpha_1 x_1} \cdot \ldots \cdot \mathrm{e}^{\frac{2\pi i}{N_m} \alpha_m x_m}$ and the map $\boldsymbol{\alpha}\mapsto \chi_{\boldsymbol{\alpha}}$ from $G$ to $\widehat{G}$ is an isomorphism.
We sometimes write $\omega_N := \mathrm{e}^{\frac{2\pi i}{N}}$ so that $\chi_\alpha(x) = \omega_N^{\alpha x}$.

%

The following relations are standard and can be used to show that the characters are orthonormal
\[
    \sum_{x\in G}\chi(x)=
    \begin{cases}
        |G| & \text{if }\chi \text{ is the identity in }\widehat{G}, \\
        0   & \text{otherwise},
    \end{cases}
    \quad \quad
    \sum_{\chi\in \widehat{G}}\chi(x)=
    \begin{cases}
        |G| & \text{if }x=0,    \\
        0   & \text{otherwise}.
    \end{cases}
\]
If $G =\Z_{N_1} \times \ldots \times \Z_{N_m}$ then for any subgroup $H \leq G$ we define the orthogonal set
\begin{equation}\label{eq_Hperp}
    H^\perp := \{a\in G\mid \chi_a(h) = 1 \text{ for all }h\in H\} \,.
\end{equation}
This  set is fundamental for the understanding of the SFT algorithm and appears frequently in Section \ref{sec:SFTalgo}. Using the relations above it can be shown that
\begin{equation}\label{eq:Hexp}
    \sum_{h\in H}\chi_h(x) = \begin{cases} |H|, & \text{if }x\in H^\perp, \\ 0, &\text{otherwise}.\end{cases}
\end{equation}


The \emph{Fourier basis} for $L^2(R)$ is the set $\widehat{G}$ consisting of all the characters $\chi$.
It is an orthonormal basis. Therefore, we can represent each function $f: R \rightarrow \C$ uniquely as a linear combination $f(x) = \sum_{\alpha \in G} \widehat{f}(\alpha) \chi_\alpha(x)$ of the characters $\chi_\alpha$.
The function $\widehat{f}:G\to \C$ given by $\widehat{f}( \alpha ) = \langle f,\chi_\alpha \rangle$ is called the \emph{discrete Fourier transform}.
The map $f \mapsto \widehat{f}(\alpha)$ is $\C$-linear.
Notice that a single Fourier coefficient encapsulates information about the function on the whole domain, unlike the representation in terms of Kronecker delta functions where one coefficient only holds information about the function at a single point.

\emph{Parseval's identity} is the following relationship between the norms of $f$ and $\widehat{f}$:
\[
    \Vert f \Vert_2^2 =  \frac{1}{|G|} \sum_{x \in G} | f(x)|^2 = \langle f, f \rangle = \sum_{\alpha \in G} | \widehat{f}(\alpha) |^2 = |G| \cdot \|\widehat{f}\|_2^2 \,.
\]

Adopting signal-processing terminology, when we work with the values $f(x)$ for $x\in G$ we say that $x$ is in the \emph{time domain}. When we use the values $\widehat{f}(\alpha)$ we say $\alpha\in G$ is in the \emph{frequency domain}.
There does not seem to be a rigorous formulation of this terminology and we do not use it much, but the reader will find it very common in the engineering literature.
We signal to the reader whether we are working in the time domain or frequency domain by using Latin letters $x,y$ for elements in the former (elements of $G$), and Greek letters $\alpha,\beta$ for the latter (corresponding to elements of $\widehat{G}$, e.g. $\chi_{\alpha}$).


Let $R=\Z_{N_1} \times \ldots \times \Z_{N_m}$ with componentwise addition and multiplication, and let $f,g\in L^2(R)$.
Basic properties of the Fourier transform include the following (note that the basis of Kronecker delta functions does not satisfy these properties, which is one reason why it is less useful than the Fourier basis):

\begin{itemize}
    \item (time) scaling: if $g(x) := f(cx)$ for $c \in R^*$, then $\widehat{g}(\alpha) = \widehat{f}(c^{-1}\alpha)$;
    \item (time) shifting: if $g(x) := f(c+x)$ for $c \in R$, then $\widehat{g}(\alpha) = \widehat{f}(\alpha)\chi_\alpha(c)$;
    \item (frequency) shifting: if $g(x) := f(x)\chi_c(x)$ for $c \in R$, then $\widehat{g}(\alpha) = \widehat{f}(\alpha-c)$;
    \item convolution-multiplication duality: $\widehat{f \ast g}(\alpha) = \widehat{f}(\alpha) \widehat{g}(\alpha)$.
\end{itemize}

We now recall some definitions from \cite{AGS, Duc, MR}. The same definitions can be made for functions over rings $R$ where $G$ is their additive group.

\begin{definition} [Restriction]
    Given a function $f:G \rightarrow \C$ and a set of characters $\Gamma \subseteq \widehat{G}$, the \emph{restriction of $f$ to $\Gamma$} is the function $f|_\Gamma: G \rightarrow \C$ defined by $f|_\Gamma := \sum_{\chi_\alpha \in \Gamma}\widehat{f}(\alpha) \chi_\alpha$.
\end{definition}


\begin{definition} [$\epsilon$-Concentration]
    Let $\epsilon>0$ be a real number. A family of functions $\{f_i:G_i\to \C\}_{i\in \N}$ is \emph{Fourier $\epsilon$-concentrated} if there exists a polynomial $P$ and sets of characters $\Gamma_i \subseteq \widehat{G}_i$ such that $|\Gamma_i| \le P( \log|G_i| )$ and $\| f_i - f_i|_{\Gamma_i} \|_{2}^{2} \leq \epsilon$ for all $i\in \N$.
\end{definition}

\begin{definition} [Concentration]
    A family of functions $\{f_i:G_i\to \C\}_{i\in \N}$ is \emph{Fourier concentrated} if there exists a polynomial $P$ and sets of characters $\Gamma_i \subseteq \widehat{G}_i$ such that $|\Gamma_i| \le P( \log|G_i|/\epsilon )$ and $\| f_i - f_i|_{\Gamma_i} \|_{2}^{2} \leq \epsilon$ for all $i\in \N$ and for all $\epsilon > 0$.
\end{definition}


Most applications are concerned with a single function that implicitly defines the entire family. In this case we informally say that the function, instead of the family, is concentrated. Examples of concentrated functions, and of this terminology, are given in Example \ref{example4}.


\begin{definition} [Heavy coefficient]
    For a function $f:G \rightarrow \C$ and a threshold $\tau>0$, we say that a coefficient $\widehat{f}(\alpha)$ (of the character $\chi_\alpha$) is \emph{$\tau$-heavy} if $|\widehat{f}(\alpha)|^2>\tau$.
\end{definition}

By Parseval's identity it is evident the number of $\tau$-heavy coefficients for a function $f:G \rightarrow \C$ is at most $\| f \|_2^2 / \tau$ (see \cite[Lemma 3.4]{KM} or \cite[Lemma 4.8]{Mansour94}). Thus, the cases of interest are where the latter value is polynomial in $\log(|G|)$, so there are at most polynomially many $\tau$-heavy coefficients. This forces $\tau$ to be relatively large to $\| f \|_2$, e.g. $\tau = \| f \|_2/ poly(\log(|G|))$. We remark that it might have been better to define a $\tau$-heavy coefficient to satisfy $|\widehat{f}(\alpha)|^2>\tau \| f \|_2^2$, however we keep the notion that is mostly used in the literature (as we show below most applications consider the specific case $\| f \|_2^2 = 1$).

The phrases \emph{significant coefficient} and \emph{heavy coefficient} are often used interchangeably to mean any coefficient $\widehat{f}(\alpha)$ which is large relative to the norm of the function, but without reference to any specific value of $\tau$.
In this paper our convention is to use ``heavy'' in a formal sense and ``significant'' in an informal sense.

The relationship between concentrated functions and functions with significant coefficients is subtle.
If a function has a $\tau$-heavy coefficient, then it is $(1-\tau)$-concentrated (with $|\Gamma|=1$). But such a function is not necessarily $\epsilon$-concentrated for all $\epsilon$.
The literature has tended to focus on concentrated functions, but for many of the bit security applications it is sufficient that the function has one or more significant coefficients.
The distinction is important since it is harder to prove that a function is concentrated than to prove it has a significant coefficient.

\begin{example}\label{example4}
    Here are some examples of functions with significant coefficients, most of which are concentrated:
    \begin{itemize}
        \item A single character is concentrated; that is, the family $\{\chi_\alpha:\Z_n \to \C\}_{n>\alpha}$ for some $\alpha \in \N$ is concentrated. The case $\alpha = 0$ corresponds to constant functions, which are concentrated but will be un-interesting in our applications.

        \item For the least-significant-bit function $\LSB(x)$ on $\Z_{2^n}$, which gives the parity of $x$, the functions $f : \Z_{2^n} \to \C$ given by $f(x):=(-1)^{\LSB(x)}$ are concentrated.
              Indeed, these functions correspond to the characters $f(x) = (-1)^x = \omega_{2^n}^{2^{n-1} x} = \chi_{2^{n-1}}(x)$.

        \item The functions $\half :\Z_N \to \{-1,1\}$, for which $\half(x) = 1$ if $0 \leq x < \frac{N}{2}$ and $\half(x) = -1$ otherwise, are concentrated; one has $\widehat{\half}(\alpha) = \frac{1}{N} [ \sum_{0 \leq x < \frac{N}{2}} \overline{\chi_\alpha}(x) - \sum_{\frac{N}{2} \leq x < N} \overline{\chi_\alpha}(x) ]$.
              Elementary arguments (see Claim~\ref{fact1} below) show that
              \[
                  \left| \frac{1}{N}\sum_{0 \leq x < \frac{N}{2}}\overline{\chi_\alpha}(x) \right|
                  = \left| \frac{1}{N}\sum_{0 \leq x < \frac{N}{2}} \omega_N^{-\alpha x} \right|
                  < \frac{1}{| |\alpha|_N|}
              \]
              where $|\alpha|_N$ denotes the unique integer in $(-N/2, N/2]$ that is congruent to $\alpha$ modulo $N$.
              Similarly $\left|\frac{1}{N}\sum_{\frac{N}{2} \leq x < N}\overline{\chi_\alpha}(x)\right|<\frac{1}{| |\alpha|_N |}$.
              These results can be used to show that $\half$ is concentrated on a set of characters $\alpha$ with small $| |\alpha|_N |$; See~\cite[Claim 4.1]{AGS}. Similar arguments hold for the most-significant-bit function $f(x):=(-1)^{\MSB(x)}$, thus it is also concentrated.

        \item For primes $p$, the functions $f:\Z_p\to \C$ given by $f(x):=(-1)^{\LSB(x)}$ are concentrated.
              This follows from $f(x) = \half(2^{-1} x)$ and the scaling property.

        \item The function $\text{LPN}_s: \{0,1\}^n \to \{0,1\}$, given by $\text{LPN}_s(x) = (-1)^{\langle x,s \rangle + e(x)}$ for $e$ which is mostly $0$ (and otherwise $1$), has a significant coefficient and therefore is $\epsilon$-concentrated (for some large $\epsilon$).
              Let $I$ be the set for which $e(x) = 1$, then $\widehat{\text{LPN}_s}(s) = \frac{1}{2^n} \sum_{x \notin I} 1 + \frac{1}{2^n} \sum_{x \in I} (-1) = 1 - \frac{2|I|}{2^n}$. Since the size $|I|$ is relatively small, the coefficient $\widehat{LPN_s}(s)$ is large, that is, the function $\text{LPN}_s$ ``behaves'' like the character $\chi_s$ in $\{0,1\}^n$. If $|I|$ is very small, for example $|I|=poly(\log|G|)$, then $\text{LPN}_s$ is also concentrated. Moreover, one can show that $|\widehat{\text{LPN}_s}(v)| \le \tfrac{|I|}{2^n}$, and on average is expected to be proportional to $\sqrt{2|I|/ 2^n(2^n-1)} \approx \sqrt{2|I|}/2^n$.


        \item `Noisy characters' given by $f(x):=\omega_p^{\alpha x+e(x)}$ for some suitable random functions $e$ have a significant coefficient $\widehat{f}(\alpha)$ as we show in Section \ref{sec:e-concentrated}. An example of such a noisy character is the function $\text{LWE}_s: \Z_p^n \to \Z_p$, given by $\text{LWE}_s(x) = \omega_p^{\langle x,s \rangle + e(x)}$ for $e(x)$ drawn from a Gaussian distribution.
    \end{itemize}
\end{example}

Another example of concentrated functions are the $i$-th bit functions, see Section~\ref{sec:i-th-bit} for details.

\subsection{Learning model}

Let $f: R \to \C$ be a function for which one wants to learn its significant coefficients.
The learner gets \emph{access} to samples of the form $(x, f(x))$. In the \emph{random access} model the learner receives polynomially many samples for inputs $x \in R$ drawn independently and uniformly at random. As opposed to this model, in the \emph{query access} model the learner can query the function on any chosen input $x \in R$ to receive the corresponding sample.

A \emph{learning algorithm} for a function $f:G\to\C$ outputs a set containing all the significant Fourier coefficients of $f$. Formally, given a function $f$ and $\epsilon, \delta > 0$, the algorithm outputs a set $\Gamma$ of size polynomial in $\log(|G|)$ and $\epsilon^{-1}$, such that $\| f - f|_\Gamma \|_{2}^{2} \leq \epsilon$ with probability at least $1-\delta$.

The main result of this subject (see Theorem~\ref{thm_sft_algo} below) is that there is a randomised polynomial-time algorithm to compute a sparse approximation $f|_\Gamma$ to a concentrated function in the query access model. In other words, concentrated functions admit a polynomial-time learning algorithm in the query access model.

\subsection{Probability}

The Chernoff bound gives an upper bound on the probability that a sum of independent random variables deviates from its expected value. One can therefore derive a lower bound for the number of samples needed to estimate the sum of independent random variables, with any required probability and error term. For a random variable $X$ on a set $A \subseteq \C$ we denote by $\E_{x \in A} X(x)$ the expected value $\sum_{x \in A} X(x) \Pr(x)$.

\begin{theorem}[Chernoff] \label{Chernoff}
    Let $A$ be a set of complex numbers such that $|x|\leq M$ for all $x\in A$. Let $x_i\in A$ be chosen independently and uniformly at randomly from $A$. Then
    \[
        \Pr\left[\left|\E_{x\in A}[x]-\frac{1}{m}\sum_{i=1}^mx_i\right| > \lambda \right]\leq 2 \mathrm{e}^{-\lambda^2 m / 2 M^2} .
    \]
\end{theorem}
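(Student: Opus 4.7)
The plan is to reduce the complex-valued statement to the classical real-valued Chernoff/Hoeffding inequality by decomposing each sample into its real and imaginary parts. Writing $x_j = u_j + i v_j$, the hypothesis $|x_j| \leq M$ forces $|u_j|, |v_j| \leq M$, so both sequences $(u_j)$ and $(v_j)$ are bounded real-valued i.i.d.\ random variables to which the one-dimensional tail bound applies directly.

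First I would derive (or cite) the real-valued bound $\Pr[|\bar Y - \E Y| > t] \leq 2 \mathrm{e}^{-m t^2 / (2 M^2)}$ for i.i.d.\ variables $Y_j$ with $|Y_j| \leq M$. The standard derivation proceeds via Hoeffding's lemma (the centred moment generating function is bounded by $\exp(s^2 M^2 / 2)$ on the interval $[-M,M]$), independence of the $Y_j$ to tensorise the MGF, an exponential Markov step, optimisation over the tilt parameter $s$, and a reflection argument to convert the one-sided bound into the two-sided absolute-value bound.

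Second, I would invoke a Pythagorean observation: if $|\bar x - \E x|^2 = (\bar u - \E u)^2 + (\bar v - \E v)^2 > \lambda^2$, then at least one of $|\bar u - \E u|$ or $|\bar v - \E v|$ exceeds $\lambda/\sqrt 2$. Applying the one-dimensional bound to each coordinate with threshold $t = \lambda/\sqrt 2$ and combining via a union bound yields the claimed tail estimate, up to an adjustment of the absolute constants. The only real obstacle is this bookkeeping of constants between the real and complex formulations; the underlying probabilistic content is entirely contained in the classical real-valued bound, so no new probabilistic idea is needed beyond the real/imaginary decomposition.
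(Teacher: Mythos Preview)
The paper does not supply its own proof of this theorem: it is stated in the preliminaries as a standard concentration inequality and is simply cited, so there is no argument to compare against. Your sketch via real/imaginary decomposition and Hoeffding's lemma is the standard route and is correct in substance.

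One point worth recording: as you yourself note, the constants do not quite land on the stated form. Hoeffding for $Y_j\in[-M,M]$ gives $\Pr[|\bar Y-\E Y|>t]\le 2\exp(-mt^2/(2M^2))$, and after the Pythagorean split at threshold $\lambda/\sqrt{2}$ plus a union bound one obtains $4\exp(-m\lambda^2/(4M^2))$ rather than $2\exp(-m\lambda^2/(2M^2))$. For every use the paper makes of this bound the discrepancy is harmless (only polynomial sample sizes are at stake), and such variations in the leading constant and exponent are common across different statements of ``the'' Chernoff bound in this literature. If you want the exact exponent as written, you would need a sharper two-dimensional argument (e.g.\ bounding $\E[e^{s\langle \theta, X\rangle}]$ uniformly over unit directions $\theta$ and discretising), but that is not something the paper itself does.
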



\subsection{Table of notations}\label{sec:table}

We summarize the main notation and definitions in the following table.
\begin{center}
    \begin{tabular}{ l l }
        \hline
        Notation/Definition      & Meaning                                                                                \\ \hline
        $\omega_n$               & The complex $n$-th root of unity $\mathrm{e}^{2\pi i/n}$.                              \\
        $\chi$                   & A character of $G$.                                                                    \\
        $H^\perp$                & The orthogonal set $\{\alpha \in G \mid \chi_\alpha(h) = 1 \text{ for all }h\in H\}$.  \\
        $\widehat{f}$            & The Fourier transform of $f$.                                                          \\
        Scaling property         & $\widehat{g}(\alpha) = \widehat{f}(c^{-1}\alpha)$ for $g(x) := f(cx)$ and $c \in R^*$. \\
        $\tau$-heavy coefficient & A coefficient satisfying $|\widehat{f}(\alpha)|^2>\tau$.                               \\
        Significant coefficient  & A $\tau$-heavy coefficient, for some $\tau^{-1} = poly(\log|G| , \| f \|_\infty)$.     \\
        Query access             & The ability to ask for $f(x)$ for any input $x$.                                       \\
        \hline
    \end{tabular}
\end{center}

\section{Clarifications: Principles Underlying SFT Algorithms}
\label{sec_algorithm}

In the last few decades several significant Fourier transform (SFT) algorithms were proposed in the literature in several scientific areas. The early algorithms treat specific functions, while the later algorithms apply to classes of functions. The principles underlying these algorithms come from elementary group theory. The aim of this section is to clarify the rules that govern these algorithms. Our analysis gives a unified presentation for all of these algorithms, which we believe brings clarity to the literature and will be more accessible to non-experts.

A precise statement of what an SFT algorithm does is given in Theorem~\ref{thm_sft_algo}.
Section \ref{sec:history} gives an overview of the earlier algorithms.
Section~\ref{sec:SFTalgo} presents the unified SFT algorithm in the query access model. The section starts with a high-level presentation of the SFT algorithm.
We then describe the algorithm with a focus on the required algebraic relations between the queries, thus explaining the need for query access.
For these relations to arise, the function's domain needs to be ``highly composite'', i.e. to contain many subgroups. We give examples of the requirement on the queries in some specific domains. We then turn to an analysis of the algorithm on domains of prime order. Moreover, the original approach that we present in Section~\ref{sec:mod-switch} gives further insights on the connections between the different domains. We finish this section with two short descriptions. Section~\ref{sec:noisy-oracle} discusses cases where some of the function's outputs (i.e. the algorithm's inputs) are ``noisy'', that is where the actual values are replaced with some other values. Section~\ref{sec:hardness-LPN} explains why an SFT algorithm in the random access model is unlikely to exist.

\begin{theorem} [{\cite[SFT algorithm]{AKA-thesis}}{\cite[Theorem 5]{AGS}}]
    \label{thm_sft_algo}
    Let $G$ be an abelian group represented by a set of generators of known orders. There is a learning algorithm that, given query access to a function $f: G \rightarrow \C$, a threshold $\tau > 0$ and $\delta > 0$, outputs a list $L$ of size at most $2\|f\|_2^2 / \tau$ such that
    \begin{itemize}
        \item $L$ contains all the $\tau$-heavy Fourier coefficients of $f$ with probability at least $1-\delta$;
        \item $L$ does not contain coefficients that are not $(\tau/2)$-heavy with probability at least $1-\delta$.
    \end{itemize}
    The algorithm runs in polynomial time in $\log\left(|G|\right)$, $\| f \|_\infty^2 / \tau$ and $\log\left(\frac{1}{\delta}\right)$.
\end{theorem}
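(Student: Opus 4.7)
The plan is to perform a binary-search-style refinement on the dual group $\widehat{G}$, maintaining at each round a shortlist of cosets guaranteed (with high probability) to contain every $\tau$-heavy coefficient, and narrowing those cosets down to singletons. Using the supplied generators of $G$, I fix an increasing chain of subgroups $\{0\} = H_0 \subseteq H_1 \subseteq \cdots \subseteq H_N = G$ whose successive indices $[H_{i+1}:H_i]$ are bounded (say, by $2$ when each cyclic factor of $G$ has smooth order, one bit at a time). The dual chain $G = H_0^\perp \supseteq H_1^\perp \supseteq \cdots \supseteq H_N^\perp = \{0\}$ then partitions $\widehat{G}$ into cosets of the $H_i^\perp$ that shrink down to singletons in $N = O(\log |G|)$ rounds.

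At round $i$ I carry a list $L_i$ of cosets of $H_i^\perp$, initialised with $L_0 = \{G\}$. I split each surviving $\alpha + H_i^\perp$ into its sub-cosets modulo $H_{i+1}^\perp$, and for every such sub-coset $\beta + H_{i+1}^\perp$ I estimate its total squared-Fourier mass
\[
E(\beta + H_{i+1}^\perp) \;=\; \sum_{\gamma \in H_{i+1}^\perp} \bigl|\widehat{f}(\beta+\gamma)\bigr|^2 .
\]
Using the frequency-shift property and the subgroup-projection identity, this mass equals $\|g^{H_{i+1}}\|_2^2$ with $g := f \cdot \overline{\chi_\beta}$ and $g^H(x) := \E_{h \in H} g(x+h)$, which expands to
\[
E(\beta + H_{i+1}^\perp) \;=\; \E_{x \in G}\,\E_{h_1, h_2 \in H_{i+1}}\, f(x+h_1)\,\overline{f(x+h_2)}\,\overline{\chi_\beta(x+h_1)}\,\chi_\beta(x+h_2).
\]
This is an unbiased sample estimator computable from query access to $f$. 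Applying Theorem~\ref{Chernoff} with envelope $M = \|f\|_\infty^2$ and accuracy $\tau/4$, a number of samples polynomial in $\|f\|_\infty$, $1/\tau$ and $\log(1/\delta)$ pins every estimate within $\tau/4$ of the truth. Sub-cosets whose estimate exceeds $3\tau/4$ are promoted to $L_{i+1}$; by construction, cosets of true mass at least $\tau$ survive and cosets of true mass below $\tau/2$ are discarded, all with the desired confidence.

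Parseval's identity enforces the invariant $|L_i| \leq 2\|f\|_2^2/\tau$ throughout: the retained cosets have disjoint frequency supports and each carries true mass at least $\tau/2$, while the total mass is $\|f\|_2^2$. Hence the per-round work and the final list are polynomial in the stated parameters, and a union bound over the $O(N\|f\|_2^2/\tau)$ estimator calls absorbs into the claimed global failure probability $\delta$. Because $L_N$ consists of singletons, the two guarantees of the theorem follow immediately: a truly $\tau$-heavy coefficient survives every round, and no coefficient with mass below $\tau/2$ can appear in $L_N$.

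The principal obstacle is constructing a usable chain when $G$ contains a cyclic factor $\Z_p$ of large prime order, because such a factor has no non-trivial proper subgroup and the bit-by-bit refinement cannot be executed natively. This is precisely the issue that motivates Section~\ref{sec:mod-switch}: one embeds the problem into $\Z_{2^n}$ for $n = \lceil \log_2 |G| \rceil$ via modulus switching, runs the refinement above on $\Z_{2^n}$, and shows that the approximation error between the true characters and the switched ones is small enough that heavy coefficients are still pinpointed (up to a small loss in $\tau$). With this step in place, the construction applies to every finite abelian group and delivers the stated complexity and correctness bounds.
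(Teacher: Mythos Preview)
Your proposal is correct and follows essentially the same binary-search-on-cosets strategy as the paper's Section~\ref{sec:SFTalgo}: partition the frequency domain via a subgroup tower, estimate the Fourier mass on each coset by sampling (your averaging over $h\in H_{i+1}$ is exactly the paper's convolution with the filter supported on $H^\perp$, up to swapping the roles of $H$ and $H^\perp$ in the notation), and prune using Chernoff and Parseval. The one substantive difference is your treatment of large-prime cyclic factors: the paper, following AGS, handles $\Z_p$ directly in Section~3.2.3 via ``noisy'' filter functions centred on overlapping intervals (so that $\widehat{h_A}$ is only approximately the indicator of $A$), whereas you sidestep this entirely by invoking the modulus-switching reduction to $\Z_{2^n}$ that the paper develops separately in Section~\ref{sec:mod-switch}. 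Both routes are valid; the direct AGS route is self-contained but technically fussier (overlapping intervals, dependence on the centre point --- see Remark~\ref{rem_intervals}), while your route is cleaner at the algorithmic level but imports the concentration-preservation result of Theorem~\ref{thm:LS} as a black box.
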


\subsection{History and special cases}
\label{sec:history}

Key ideas behind the SFT algorithm first arose in other settings, and the aim of this section is to put some of this early work in context. This section is not needed in order to understand the SFT algorithm. Readers who are mainly interested in understanding the general SFT algorithm should feel free to skip this section and go straight to Section~\ref{sec:SFTalgo}.

\subsubsection{Goldreich--Levin} \label{sec:GL}

Consider a `noisy' inner product function $f_s: \{0,1\}^n \to \{0,1\}$ given by $f_s(x) = \langle x, s \rangle + \delta(x)$ (addition takes place mod $2$) where $\delta(x) = 1$ with some small probability (noticeably smaller than $1/2$) and otherwise $\delta(x) = 0$.
This is the same function as in the well-known learning parity with noise (LPN) problem.
The task is to learn $s$ given samples $f_s(x_i)$.

The connection to the Fourier basis can be seen by reformulating the problem as follows. Define $g: \{0,1\}^n \to \{-1,1\}$ by $g(x) = (-1)^{f_s(x)} = (-1)^{\langle x, s \rangle + \delta(x)}$. Notice that when $\delta \equiv 0$ then $g$ is in fact the character $\chi_s(x) = (-1)^{\langle x, s \rangle}$. The fact that $\delta(x)=0$ on most inputs guarantees that $\widehat{g}(s)$ is a significant Fourier coefficient for $g$, as shown in Example~\ref{example4}.

In the random access model, where one gets arbitrary samples, LPN is considered to be a hard computational problem (unless $\delta(x) = 0$ for all, or almost all, $x$; then reconstructing $f_s$ is an easy linear algebra problem).
Goldreich and Levin~\cite{GL} (GL) considered this problem in the query access model, and gave an efficient algorithm to solve it as we briefly explain. In the simplest setting there is a single $\tau$-heavy coefficient for $\tau > 1/2$.

If one can choose the queries for $f_s$ then an elementary approach is to query on the unit vectors $e_1 := (1,0,\dots,0), \dots, e_n := (0,\dots,0,1)$ to learn $s$ bit-by-bit. However, since the query on $e_i$ may return the answer $\langle e_i, s \rangle + 1$, one would like to generate a small set of independent values of the form $\langle e_i, s \rangle + \delta$, and determine $s_i$ by majority rule, as $\delta=0$ with probability noticeably greater than $1/2$. This can simply be achieved by querying on correlated values $x$ and $x + e_i$ to get the results $\langle x, s \rangle + \delta(x)$ and $\langle x, s \rangle + \langle e_i, s \rangle + \delta(x+e_i)$. If both answers are not noisy (or if both are noisy) then by subtracting one from the other we get $\langle e_i, s \rangle$, which is the $i$-th coordinate of $s$.
(For the interested reader: if the noise rate is at least $1/4$, then there may not be a unique solution (see Section \ref{sec:noisy-oracle}); Rackoff (see~\cite[Section C.2]{Rackoff}) suggested to use a trick due to Alexi et al.~\cite{ACGS} to deal with this case.)

The original Goldreich--Levin paper~\cite{GL} does not give a clear description of the learning algorithm. A description in the language of Fourier analysis was given in~\cite{KM} by Kushilevitz and Mansour.

%

\subsubsection{Bleichenbacher}

Bleichenbacher~\cite{Bleichenbacher} seems to have been the first to consider these problems in the case of functions on $\Z_N$ where $N$ is not a power of 2. He considers a `noisy' product function $f_s: \Z_N \to \Z_N$ given by $f_s(x) = sx + \delta(x)$ where $|\delta(x)| < \frac{N}{2^\lambda}$, for some real number $\lambda$, with probability (noticeably) greater than $1/2$. This is usually viewed as outputting about $\lambda$ most significant bits of the product $s x \in \Z_N$, as $s x$ and $f_s$ differ by a small number. The task, as before, is to learn $s$ given samples $f_s(x_i)$. The connection to the Fourier basis can be seen by reformulating the problem as done in the previous GL case -- see the `noisy character' case in Example~\ref{example4}.

This problem is in fact the hidden number problem that was considered in \cite{BV} and which we further discuss in Section~\ref{sec_applications}. Notice that if one can obtain any query, then this problem can be solved by successively multiplying by $2$ to read the bits of $s$. Since some samples may be erroneous, majority rule is used, similar to the approach taken in the GL case. Moreover if $\delta(x)$ is very small, finding $s$ and reconstructing $f_s$ is easy (by ranging over all possible values for $\delta$).

Bleichenbacher's original setting takes place in the random access model, so he gives a method (not efficient for large domains) to obtain samples $f_s(x)$ for which $x$ lie in short intervals, and then gives a method to solve the original problem. We explain the latter method.
Here however, one is not assumed to have any chosen query, but only that the queries lie in some (designated) intervals.

The main idea to solve this problem comes from the fact that one can use small (but gradually increasing) multipliers, not necessary powers of $2$, to learn the bits of $s$. This comes from the following observation: if $s < \frac{N}{2^\eta}$, for some $\eta \geq 0$, then $s y < N$ for every $0\leq y \leq 2^\eta$. In other words, the product $sy$ does not `wrap-around' the modulus $N$.

The latter observation can be used to determine upper bits of $s$: given $y$ and $f_s(y) = sy+\delta(y)$, take $\lfloor f_s(y)/y \rceil = \lfloor s + \delta(y)/y\rceil$; assuming there is no wrap-around over $N$ in $f_s(y)$, we get some upper bits of $s$. For example, if $2^{\eta-1}\leq y \leq 2^\eta$, then $|\delta(y)/y| < \frac{N}{2^{\eta+\lambda-1}}$ so we roughly learn $\lambda-1$ of the upper bits of $s$ that were not already known.

Now suppose one knows $\MSB_\rho(s)$, the $\rho$ most significant bits of $s$, then by subtracting it from $s$ we have $s' := s-\MSB_\rho(s) < \frac{N}{2^\rho}$. The goal now is to learn further (upper) bits of $s'$. One can define $f_{s'}(y)$ to be $f_s(y) - \MSB_\rho(s) y = sy - \MSB_\rho(s)y + \delta(y) = s' y + \delta(y)$. Thus, for appropriate multiplier $y$, say $2^{\rho-1}\leq y \leq 2^\rho$, we can determine more upper bits of $s'$ as above. Repeating this procedure, one eventually learns all bits of $s$.

Notice that this approach requires having multipliers drawn from some interval  $\{0, 1, \ldots, 2^i - 1\}$ (specifically small multipliers in the first stages, which are the `hardest' to get). Moreover, since it is not always the case that $|\delta(x)| < \frac{N}{2^\lambda}$, we need to generate independent multipliers from these intervals. Similar to the approach in the GL case, this is done by fixing some $z$ and querying on $z+r$ for $r$ chosen uniformly in $\{0, 1, \ldots, 2^i - 1\}$, then subtracting. Thus the queries have to be correlated such that their difference lies in the required interval.

This description presents the core ideas behind Bleichenbacher's algorithm in a manner similar to the description of the GL algorithm above. Bleichenbacher's description, which involves terminology from Fourier analysis, resembles the Kushilevitz--Mansour modification to the GL algorithm (see below) and the ideas described in Section~\ref{sec:mod-switch}. For the full details we refer to Bleichenbacher~\cite{Bleichenbacher} (see also Section~\ref{sec:e-concentrated} below).
This method does not seem to have been used for cryptographic applications until the recent works~\cite{De-Mulder,GLV/GLS}.

\subsubsection{Following work}

The early work did not explicitly mention Fourier coefficients, but it was realised that one can re-phrase the problems as finding significant Fourier coefficients of related functions, as we show above.
The Goldreich--Levin case was generalized by Kushilevitz and Mansour~\cite{KM} (KM) to any real-valued function over $\{0,1\}^n$ and this work was the first to explicitly treat functions with more than one significant Fourier coefficient.

Subsequently, Mansour~\cite{Mansour92} gave an algorithm for functions $f : \Z_{2^n} \to \C$. Unlike other works, Mansour's algorithms computes the significant coefficients from the least significant bit to the most significant bit (a link between these works~\cite{KM,Mansour92} is explained in Remark \ref{rem_mod-dim} below).
The approach of Mansour was extended, thereby giving a generalisation of Bleichenbacher's result, by Akavia, Goldwasser and Safra~\cite{AGS} (AGS).

Notice that combining the KM and AGS ideas gives an algorithm for all groups $\Z_{N_1} \times \dots \times \Z_{N_r}$, since one can easily collapse from the latter to $\Z_{N_j}$ (by choosing appropriate queries, for example queries of the form $r \cdot e_j$ for desired values $r \in \Z_{N_j}$). Therefore, the case of most interest is $G = \Z_p$ which we present below.
As further evidence for the unity of all these ideas we remark that the KM and AGS algorithms query on exactly the same set of queries as GL and Bleichenbacher (and subsequently reveal the significant coefficients bit-by-bit from MSB to the LSB).

\subsection{The SFT algorithm}\label{sec:SFTalgo}

Let $f: G \to \C$. Given a threshold $\tau \in \R$, the algorithm outputs all $\tau$-heavy Fourier coefficients of $f$ (and potentially some other $\tau/2$-heavy coefficients) with overwhelming probability.

We first give a high-level view of how the algorithm works. The method is a form of binary search: the algorithm divides the set of Fourier coefficients into two (disjoint) sets, say $A$ and $B$, and checks each set separately to determine whether it potentially contains a $\tau$-heavy coefficient. To do this the algorithm defines two new functions, one for each set of coefficients.
A clever use of Parseval's identity allows the algorithm to check the size of all coefficients in each set simultaneously, given the norm of each function.
Hence, the task is to determine the norms of the two new functions, which requires a method to compute the function outputs.
The structure of the sets $A,B$ is important: for some sets we have useful formulas to compute the functions at required values.
Instead of precisely calculating these values, it is sufficient to have approximations of the outputs of the functions and to approximate the norm of each function. The Chernoff bound is then used to bound the error term in the approximations.

Schematically, the algorithm operates as follows, where we initially take $D = G$:
\begin{itemize}
    \item Partition $D = A \cup B$, and define $f_A(x) := \sum_{\alpha \in A} \widehat{f}(\alpha)\chi_\alpha(x)$ and $f_B(x) := \sum_{\beta \in B} \widehat{f}(\beta)\chi_\beta(x)$.
    \item Approximate the values $f_A(x_i)$ and $f_B(y_j)$ for polynomially many samples $x_i, y_j$, chosen uniformly at random. This is done using the fundamental relation in (\ref{eq:conv}) below.
    \item Using the values from the previous step, approximate the norms $||f_A||_2^2$ and $||f_B||_2^2$. See (\ref{eq_STF}).
    \item Using Parseval's identity $||f_A||_2^2 = \sum_{\alpha \in A} |\widehat{f}(\alpha)|^2$, if the approximation of the norm is smaller than\footnote{A lower threshold $\frac{3}{4}\tau$ is needed since the algorithm only approximates the norm. As a consequence, the final list may contain coefficients that are $\frac{\tau}{2}$-heavy but not $\tau$-heavy.} $\frac{3}{4}\tau$ then with overwhelming probability $f$ does not have a $\tau$-heavy coefficient in $A$. Hence, dismiss $A$. Act similarly for $f_B$.
    \item Run the algorithm recursively on the remaining sets and stop when it reaches singletons.
\end{itemize}

\begin{remark}
    We emphasize that the algorithm can work with any function $f$ and with any threshold $\tau$. Specifically, if $f$ does not have any $\tau$-heavy coefficients, then the algorithm will output an empty list.
    However, the running time is polynomial in $\|f\|_\infty^2/\tau$ so the algorithm will not be efficient if the threshold is chosen to be too low.

\end{remark}

\subsubsection{Domains of size $2^n$}

We now sketch an algorithm that unifies the KM and Mansour algorithms. Our presentation is more group-theoretic than the original works.
We refer to~\cite{KM} and ~\cite{Mansour94} for exact details and proofs.

Let $f: G \to \C$ and $\tau \in \R$.
At each iteration the algorithm takes a set $D$ (starting with $D=G$) and proceeds as follows.

\noindent\textbf{Partial functions.} Partition $D = A \cupdot B$ into two sets that are defined below.
%
Define the function $f_A: G \to \C$ by $f_A(x) = \sum_{\alpha\in A} \widehat{f}(\alpha)\chi_\alpha(x)$.
If $f$ has a $\tau$-heavy coefficient $\alpha$ and $\alpha \in A$, then $f_A$ has a $\tau$-heavy coefficient. All arguments hold similarly for the set $B$.

\noindent \textbf{Estimating $f_A(x)$.} We need a method to estimate values of the function $f_A$ using values of the original function $f$.
We define a \emph{filter function} $h_A: G \to \C$ by $h(x) = \sum_{\alpha \in A} \chi_\alpha(x)$, and then use the property $\widehat{f \ast h_A} = \widehat{f} \cdot \widehat{h_A}$.
Since
\[ \widehat{h_A}(\alpha) = \left\{
    \begin{array}{l l}
        1 & \quad \alpha \in A,       \\
        0 & \quad \mathrm{otherwise,}
    \end{array}
    \right. \]
we have
\[\widehat{f \ast h_A}(\alpha) = \left\{
    \begin{array}{l l}
        \widehat{f}(\alpha) & \quad \alpha \in A,       \\
        0                   & \quad \mathrm{otherwise.}
    \end{array}
    \right.\]
In other words,
\begin{equation}\label{eq:conv}
    f \ast h_A = f_A \, .
\end{equation}
%

Convolution is not a task we have an efficient method to calculate in general, let alone efficiently calculating $h_A(x) = \sum_{\alpha \in A} \chi_\alpha(x)$. Therefore, the structure of the sets is important and plays a key role in the ability to apply the algorithm. Notice that if $A$ is an arithmetic progression, then $\sum_{\alpha \in A} \chi_\alpha(x) = \sum_j \chi_{qj+r}(x) = \chi_r(x) \sum_j \chi_q(jx)$, and so it can be evaluated by the formula for geometric series. More generally, assume $D \leq G$ is a subgroup and let $H \leq D$ be a subgroup (of index $2$). We take $A$ to be a coset $A = z+H$ for some $z\in G$ (then $B$ is taken to be the other coset). Then,
\[ h_{z+H}(x) = \sum_{h\in H}\chi_{z+h}(x) =\sum_{h\in H}\chi_z(x)\chi_h(x) = \chi_z(x)\sum_{h\in H}\chi_h(x) \,,\]
and the latter is zero unless $x \in H^\perp$ ($H^\perp$ is defined in (\ref{eq_Hperp}) above). Thus the function $h_A$ is given by
\begin{equation} \label{filter}
    h_A(x) = h_{z+H}(x) = \begin{cases} \chi_z(x)\cdot |H|, & \text{if }x\in H^\perp, \\ 0, &\text{otherwise.}\end{cases}
\end{equation}

We therefore get, since $|H| |H^\perp| = |G|$,
\begin{equation*}
    \begin{split}
        f_A(x) & = f \ast h_A(x) = \E_{y \in G}\left[f(x-y)h_A(y)\right] = \frac{1}{|G|} \sum_{y \in G} f(x-y)h_A(y) \\
        & = \frac{1}{|G|} |H| \sum_{y \in H^\perp} f(x-y)\chi_z(y) = \E_{y \in H^\perp}\left[f(x-y)\chi_z(y)\right] .
    \end{split}
\end{equation*}


\noindent \textbf{Estimating $\| f_A \|_2$.} We can now write $\|f_A\|^2$ as
\[ \| f_A \|_2^2 =  \E_{x\in G}\left|(f * h_A)(x)\right|^2 = \E_{x\in G}\left|\E_{y\in G}\left[f(x-y)h_A(y)\right]\right|^2 = \E_{x\in G}\left| \E_{y\in H^\perp}\left[f(x-y)\chi_z(y)\right]\right|^2 . \]
Again, an approximation of the norm is sufficient (a consequence of the approximation is that we have to lower the threshold $\tau$ a little bit).

We can therefore approximate $\| f_A\|_2^2$ by choosing $m_1, m_2$ sufficiently large (given by the Chernoff bound), randomly choosing\footnote{Note that as in \cite{KM,Mansour92} one can define the function $f_A$ over $H$ (and not $G$), and therefore choose the values $x_i$ from $H$.} $x_i\in G$ where $1\leq i \leq m_1$, randomly choosing $y_{ij}\in H^\perp$ for each $i$ where $1\leq j \leq m_2$ and calculating
\begin{equation}\label{eq_STF}
    \frac{1}{m_1}\sum_{i = 1}^{m_1}\left|\frac{1}{m_2}\sum_{j=1}^{m_2}f(x_i-y_{ij})\chi_z(y_{ij})\right|^2 \approx \| f_A\|_2^2 = \sum_{\alpha \in A} |\widehat{f}(\alpha)|^2 .
\end{equation}

One then checks if this value is smaller than $3\tau /4$. If so then with overwhelming probability there is no $\alpha \in A$ such that $\widehat{f}(\alpha)$ is $\tau$-heavy, and so the set $A$ can be dismissed. Notice that if this value is greater than $3\tau /4$ it does not necessarily mean that $A$ contains a significant coefficient. In this case the algorithm sets $D=A$ and repeats until all sets are singletons or dismissed.

We give the pseudocode of the algorithm. At start, set $z=0$ and $k=n$, so $H_k=G$.
\begin{center}
    \begin{algorithm}[H]
        \caption{MainProcedure}
        \DontPrintSemicolon
        \KwIn{A coset $z+H_k$.}

        \eIf { $|H_k| = 1$ } {
        \eIf{$|\noit{Est}\widehat{f}(z)|^2\geq 3\tau/4$}{
            \Return{$\{z\}$} \;
        } {
            \Return{$\emptyset$} \;
        }
        } {
        Let $W$ be a set of coset representatives for $H_{k-1}$ in $H_k$ \;
        Let $W' = \{w\in W \mid  \noit{EstNormSq} (f_{(z+w)+H_{k-1}}) \geq  3\tau/4 \}$  \;

        \Return{$\cup_{w\in W'}\noit{MainProcedure}((z+w)+H_{k-1})$ } \;
        }
    \end{algorithm}
\end{center}
\begin{center}
    \begin{algorithm}[H]
        \caption{EstNormSq \label{EstimateNorm}}
        \DontPrintSemicolon
        \KwIn{$f_{z+H}:G\to \C$.}
        Choose $x_i\in G$ where $1\leq i \leq m_1$ \;
        For each $i$, choose $y_{ij}\in H^\perp$ where $1\leq j \leq m_2$ \;
        \Return{$\frac{1}{m_1}\sum_{i = 1}^{m_1}\left|\frac{1}{m_2}\sum_{j=1}^{m_2}f(x_i-y_{ij})\chi_z(y_{ij})\right|^2$}
    \end{algorithm}
\end{center}\begin{center}
    \begin{algorithm}[H]
        \caption{Est$\widehat{f}$ \label{EstimateCoefficient}}
        \DontPrintSemicolon
        \KwIn{$z\in G$.}
        Choose $x_i\in G$ where $1\leq i \leq m_1$ \;
        \Return{$\frac{1}{m_1}\sum_{i = 1}^{m_1}f(x_i)\chi_z(-x_i)$}
    \end{algorithm}
\end{center}

\subsubsection{Examples}\label{sec_examples}

Notice that in (\ref{eq_STF}) above for each $x_i$ one needs the samples $f(x_i-y_{ij})$. This explains the importance of having query access to the function. To illustrate this point, we give some concrete examples.

Kushilevitz and Mansour \cite{KM} consider a function $f: \{0,1\}^n \to \R$. Write $x = x_1 \ldots x_n$. At the first iteration define $A$ to contain all $n$-bit strings that start with $0$ and $B$ to contain all the $n$-bit strings that start with $1$. Then we have
\begin{equation}\label{eq_KM}
    h_A(x) = \begin{cases} 2^{n-1}, & \text{if }x = 0\ldots0 \text{ or } x=10\ldots0, \\ 0, &\text{otherwise,}\end{cases}
\end{equation}
and indeed
\[\widehat{h_A}(\alpha) = \frac{1}{2^n}\sum_x h_A(x) (-1)^{\langle \alpha, x \rangle} = \frac{1}{2} \left( (-1)^0 + (-1)^{\alpha_1} \right) = \left\{
    \begin{array}{l l}
        1 & \quad \alpha \in A;       \\
        0 & \quad \mathrm{otherwise.}
    \end{array}
    \right.\]
One can only evaluate $f \ast h_A(x)$ if one has the values $f(x)$ and $f(x+e_1)$.
This shows that the KM approach requires (in the first iteration) queries on pairs of vectors that differ by a unit vector, exactly as in the elementary approach to the GL theorem as sketched in Section~\ref{sec:GL}.

Mansour \cite{Mansour92} considers a function $f: \Z_{2^n} \to \C$. At the first iteration define $A$ to contain all the even numbers in $\Z_{2^n}$ and $B$ to contain all the odd numbers. Then, we have
\begin{equation}\label{eq_Mansour}
    h_A(x) = \begin{cases} 2^{n-1}, & \text{if }x = 0 \text{ or } x=2^{n-1}, \\ 0, &\text{otherwise,}\end{cases}
\end{equation}
and indeed
\[\widehat{h_A}(\alpha) = \frac{1}{2^n}\sum_x h_A(x) \omega_{2^n}^{\alpha x} = \frac{1}{2} \left( 1 + (-1)^{\alpha} \right) = \left\{
    \begin{array}{l l}
        1 & \quad \alpha \in A;       \\
        0 & \quad \mathrm{otherwise.}
    \end{array}
    \right.\]
One can only evaluate $f \ast h_A(x)$ if one has $f(x)$ and $f(x+2^{n-1})$.

The analysis of this algorithm is useful for the prime case below, and so we present its later stages. In stage $l$ of this algorithm, one defines the subgroup $H$ to contain all multiples of $2^l$ in $\Z_{2^n}$. Hence the cosets used to partition the solution space contain all numbers that agree on their remainder modulo $2^l$, and $H^\perp = \{x \in \Z_{2^n} \ | \ x 2^l \equiv 0 \pmod{ 2^n } \} =  \{0, 2^{n-l}, 2 \cdot 2^{n-l}, 3 \cdot 2^{n-l}, \dots, (2^l-1) 2^{n-l} \}$. Define $A = A_r = \{x \in \Z_{2^n} \ | \ x \equiv r \pmod{2^l} \} = H + r$. Then, the filter function $h_A$ satisfies
\begin{equation}\label{Zp_filter}
    h_A(x) = \begin{cases} \chi_{r}(x) \cdot 2^{n-l}, & \text{if }x\in H^\perp, \\ 0, &\text{otherwise.}\end{cases}
\end{equation}
Again, to approximate $f \ast h_A(x)$, one needs enough samples $f(x_i)$ for $x_i \in H^\perp$.

\begin{remark}\label{rem_mod-dim}
    Readers familiar with lattice cryptography may be interested to know that 
    the idea that underlies the modulus-dimension tradeoff~\cite{mod-dim} already appears in the relationship between the KM~\cite{KM} algorithm on $\{0,1\}^n$ and the Mansour~\cite{Mansour92} algorithm on $\Z_{2^n}$.
    We briefly sketch this idea. Let $\textbf{a} = (a_0,\ldots,a_{n-1}) \in \Z^n_p$, $\textbf{s} = (s_0,\ldots,s_{n-1}) \in \{0,1\}^n$, and suppose
    \[
        b \equiv \textbf{a} \cdot \textbf{s} + e \equiv \sum_{i=0}^{n-1} a_i s_i \ + \ e \pmod{p} \,.
    \]
    Writing $a = a_0 p^{n-1} + a_1 p^{n-2} + \cdots + a_{n-2} p + a_{n-1}$ and $s = s_0 + s_1 p + \cdots + s_{n-1} p^{n-1}$ we have
    \[
        as \equiv (a_0 s_0 + \cdots + a_{n-1} s_{n-1})p^{n-1} + \text{lower term} \pmod{p^n}
    \]
    and some of its MSBs agree with the MSBs of $b p^{n-1}$, when $p$ is large.

    As shown in equation~(\ref{eq_KM}) above, at the first iteration over $\{0,1\}^n$ the filter function is nonzero on the inputs $\textbf{0}$ and $\textbf{a} = (1,0,\ldots,0)$ in $\Z^n_2$.
    These vectors correspond to the values $a = 0$ and $a = 2^{n-1}$ in $\Z_{2^n}$, which are exactly the values appearing in equation~(\ref{eq_Mansour}). Since the lower terms of $a \cdot s$ are zero, when $a=0,p^{n-1}$, the MSB of $as$ and $b p^{n-1}$ agree even for $p=2$. In both domains, we use these values to recover $s_0$.
    %
    The generalization to all inputs $\textbf{a}$ arising in the algorithms is straightforward.
\end{remark}

\subsubsection{Domains of prime order}

The ideas behind the algorithm presented above make use of the fact that the domain's order can be factored as a product of small primes (especially for powers of $2$, as been shown for $\{0,1\}^n$ in \cite{KM} and for $\Z_{2^n}$ in \cite{Mansour92}). A case of interest, from the theoretical and practical sides, is domains of (large) prime order. Notice that each additive group $\Z_N$ can be decomposed into a direct product of prime subgroups $\Z_{p_1} \times \dots \times \Z_{p_n}$. The query access allows us to work over each subgroup separately, to recover the coefficients prime-by-prime, similar to the bit-by-bit approach in the GL case above (Section~\ref{sec:GL}). Indeed one can query on $\textbf{x}\boldsymbol{_j} = (0,\dots,0,x_j,0,\ldots,0)$ to work over the group $\Z_{p_j}$.\footnote{Since deterministic queries are not desirable, additional randomization is used in practice.} Thus being able to find heavy coefficients for functions over a prime group $\Z_p$ will allow us to find heavy coefficients for functions over any $\Z_N$.

For prime groups the analysis we presented for the algorithm above does not apply as $\Z_p$ does not have any proper subgroups, specifically not those of small index.
The importance of the subgroups is in the evaluation of exponential sums (such as equation~(\ref{eq:Hexp}) above), which subsequently allows us to have useful formulas for the filter functions (such as equation~(\ref{filter})).
We now show that one can still follow the steps in the algorithm above. Natural candidates for the partitioning sets are intervals (of similar size) of consecutive numbers or classes of numbers with the same remainder modulo $2^l$ (where $l$ represents the stage we work at), which is similar to the approach taken over $\Z_{2^n}$ (see Section \ref{sec_examples}).\footnote{Note that both are arithmetic progressions, which allow evaluating $h_A$.} In fact, using the frequency-shifting and scaling properties of the Fourier transform, one can show that these two partitions are equivalent (where there is a correspondence between the size of the intervals and the size of the classes), in the sense that one can transform the coefficients in an interval to coefficients of the same class modulo $2^l$ and vice versa. We show this equivalence below.

The algorithm over $\Z_p$~\cite{AGS}
works in the same steps as explained in Section \ref{sec:SFTalgo}. The main obstacle is to show how to efficiently calculate the function $f_A$, for some appropriate set $A$. We therefore focus on this step. The other steps are similar to the algorithm for domains of size $2^n$.

\noindent \textbf{Working in the `frequency domain'.} In order to show the difficulty working in a domain of prime size, we start with a naive imitation of the approach taken in the algorithm for domains of size $2^n$. Let $A$ be an arithmetic progression in $\Z_p$, and define $f_A = \sum_{\alpha\in A} \widehat{f}(\alpha)\chi_\alpha(x)$ and $h_A(x) = \sum_{\alpha \in A}\chi_\alpha(x)$ as above. Then $f_A(x) = f * h_A(x) = \E_{y \in G}\left[f(x-y)h_A(y)\right] = \E_{y \in G}\left[f(x-y)\sum_{\alpha \in A}\chi_\alpha(y)\right]$. Since $A$ is an arithmetic progression, $\sum_{\alpha \in A}\chi_\alpha(x)$ is a geometric progression for which we have a formula. We get that $f_A(x)$ is an expectation over values each of which we can calculate exactly. Moreover, unlike in the algorithm above, the filter function here is nonzero over a very large set, and therefore one can hope that specific queries are not needed in this case (as shown in Section~\ref{sec_examples} the previous filter functions are zero almost everywhere, so in order to get a good approximation of $f_A(x)$ we need the specific inputs where the filter function is not zero). This turns out to be a disadvantage. Indeed, in order to determine $f_A(x)$ in polynomial time, we can only approximate this expectation, but as the values of this geometric progression can be as large as $|A|$, one derives from the Chernoff bound that the number of samples needed to have a good approximation of $f_A(x)$ is roughly $|A|$, which is exponential in $\log(p)$ in the first stages of the algorithm.
Hence this approach is not practical.

\noindent \textbf{Working in the `time domain'.} Instead of working in the `frequency domain', we can work in the `time domain'. In this case we define $A$ to be a class of numbers with the same remainder mod $2^l$. We adapt the filter function in (\ref{Zp_filter}) to the $\Z_p$ case. As in Section \ref{sec_examples}, let $H$ be the set containing all multiples of $2^l$ in $\Z_p$. Define $H^\perp := \{0, 2^{-l}, 2 \cdot 2^{-l}, \dots, (2^l-1) 2^{-l} \}$. Notice that while $H^\perp$ is not orthogonal to $H$, it contains all numbers that give small remainder (mod $p$) when multiplied by $2^l$. Let $z \in \Z_p$ such that $z \equiv r \pmod{2^l}$ and define $A=A_r = \{x \in \Z_p \ | \ x \equiv r \pmod{2^l} \}$ to be the class in $\Z_p$ for which the remainder mod $2^l$ is $r$. We define
\[ h_{A}(x) = h_{z+H}(x) = \begin{cases} \frac{p}{2^l} \chi_{z}(x), & \text{if }x\in H^\perp, \\ 0, &\text{otherwise.}\end{cases} \]
It turns out that this function, which is a simple adaptation of (\ref{Zp_filter}) to $\Z_p$, is a `noisy' version of a `pure' filter function: the size of the coefficients $|\widehat{h_{A}}(\alpha)|$ is close to $1$ for $\alpha \in A$ and close to $0$ for $\alpha \notin A$. Indeed,
\[ \widehat{h_A}(\alpha) = \frac{1}{p} \sum_{x \in \Z_p} h_A(x) \overline{\chi_{\alpha}}(x) = \frac{1}{2^l} \sum_{x \in H^\perp} \chi_{z-\alpha}(x) \, .\]
Write $\alpha = 2^lk+j$, $z = 2^lq+r$ and $x = d 2^{-l}$ for $0 \leq j,r < 2^l$ and $0 \leq d \leq \lfloor \frac{p}{2^l} \rfloor$. Then,
\[ \widehat{h_A}(\alpha) = \frac{1}{2^l} \sum_{0 \leq d \leq \lfloor \frac{p}{2^l} \rfloor} \chi_{2^lq+r-2^lk-j}(d 2^{-l}) = \frac{1}{2^l} \sum_{0 \leq d \leq \lfloor \frac{p}{2^l} \rfloor} \chi_{q-k}(d) \chi_{r-j}(2^{-l}d) \, .\]
One can show that the last sum is large if and only if $j=r$ as $\chi_{r-j} \equiv 1$, that is if and only if $\alpha \in A$, and so that $|\widehat{h_A}(\alpha)| \approx 1$, and otherwise it is close to $0$. More precisely, for $\alpha = z$ we have $|\widehat{h_A}(\alpha)| = 1$ and as $k$ gets further away from $q$, the size of $\widehat{h_A}(2^lk+r)$ slowly decays (follows from Claim \ref{fact1} below). The function $h_A$ is said to be ``centered around" $z$. The results in Section~\ref{sec:mod-switch} below give further insights for the reasons why this adaptation of the filter function from $\Z_{2^n}$ to $\Z_p$ in the time domain, only slightly affects its frequency domain.

\noindent \textbf{The work of AGS.} The approach taken in \cite{AGS,AKA-thesis} is to work over intervals. We show how, using the scaling and frequency-shifting properties, one can transform from the set $A$ to an interval $I$ of the same size. Define $h_I(x) := h_A(2^{-l} x)$, then $\widehat{h_I}(\alpha) = \widehat{h_A}(2^l \alpha)$. This is a permutation of the coefficients of $h_A$. If $A = \{ r, 2^l+r, \ldots, t2^l+r\}$, then $I = \{r2^{-l},r2^{-l}+1,\ldots,r2^{-l}+t\}$, and the coefficients which were large on $A$ and small outside $A$ are now large over $I$ and small outside it. Moreover, if we define $h_I(x) := h_A(2^{-l} x) \chi_c(x)$ then by the shifting property the previous interval $I$ shifts to $I-c$.

AGS consider an interval $[a,b]$ of size $\lfloor \frac{p}{2^l} \rfloor$, for which $c = \lfloor \frac{a+b}{2} \rfloor$ is a middle point. They then define
\[h_{a,b}(x) = \begin{cases} \frac{p}{2^l}\chi_{c}(x), & \text{if }0\leq x < 2^l,\\ 0, & \text{otherwise.}\end{cases}\]
A direct calculation using the definition of $\widehat{h}_{a,b}(\alpha)$ shows that
\[\widehat{h}_{a,b}(\alpha) = \E_{0\leq x < 2^l}\left[\chi_c(x)\overline{\chi_\alpha(x)}\right] = \E_{0\leq x < 2^l}\left[ \chi_{c-\alpha}(x)\right].\]
Again, one can show that $|\widehat{h}_{a,b}(\alpha)| \approx 1$ if $a\leq \alpha \leq b$ and $|\widehat{h}_{a,b}(\alpha)| \approx 0$ for $\alpha$ outside this interval (see, for example, Claim \ref{fact1}). For further details see \cite{AGS,AKA-thesis}. This function is ``centered around" $c$, that is, for $\alpha = c$ we have $|\widehat{h_A}(\alpha)| = 1$ and while $\alpha$ gets further away from $c$, the size of $\widehat{h_A}(\alpha)$ slowly decays.

\begin{remark}\label{rem_intervals}
    There is a technical issue which we ignore in this description. As the size of $\widehat{h_A}(\alpha)$ slowly decays while $\alpha$ moves away from $c$, when $\alpha$ reaches the end of the interval $[a,b]$ the value $|\widehat{h_A}(\alpha)|$ is close to the value $|\widehat{h_A}(\beta)|$ for $\beta$ just outside this interval. This imposes some complexities in the filtering process; specifically one should take overlapping intervals, so the sets $A,B$ are not distinct as in the case of domains of size $2^n$. Moreover, the choice of the point $c$ (therefore the choice of the interval) also affects the filtering process. We refer to Sections $7.2.3$ and $7.2.4$ in \cite{AGS} and to \cite[Section 3]{AKA-thesis} for the technical details.
\end{remark}

With this filter function (either $h_A$ or $h_{a,b}$) $f_A$ can be approximated efficiently, as shown in the previous section. The algorithm now proceeds as the algorithm for domains of size $2^n$.

\subsection{Working with unreliable oracles}
\label{sec:noisy-oracle}

It is sometimes desirable to describe access to the function $f$ as querying an oracle. The oracle can be perfect -- always provides the correct value $f(x)$ -- or imperfect. Working with unreliable oracles is of importance in several applications. This section is dedicated to analyzing these cases.

Sometimes the samples $f(x_i)$ are given by an unreliable oracle $O$. By this we mean the oracle satisfies $O(x) = f(x)$ only with high probability. One can think of $O$ as a `noisy version' of $f$.
A common approach to this situation is to generate several independent values, each of which gives the value $f(x)$ with good probability; then, by applying majority rule, one can obtain the correct value $f(x)$ with overwhelming probability. Examples of this approach are presented in Section~\ref{sec:history}.

We show how the language of Fourier analysis gives a very general approach to analyze situations for working with unreliable oracles. The main idea is that if a function $f$ has a significant Fourier coefficient, then its noisy version also has a significant coefficient.
Note however that if $f$ is concentrated, then its `noisy' version is not necessarily concentrated.

To be precise, let $f : G \to \C$. We describe the oracle as a function $O : G \to \C$ such that $O(x) = f(x)$ on the majority of $x \in G$. We assume that $\| O \|_\infty \leq \| f \|_\infty$. Define $R : G \to \C$ by $R(x) = O(x) - f(x)$ and let $I = \{ x \in G : R(x) \ne 0 \}$. We want to show that if $\widehat{f}(\alpha)$ is $\tau$-heavy, then $\widehat{O}(\alpha)$ is $\tau'$-heavy, for some $\tau'$ relatively large (its precise size depends on the success rate of the oracle).

Since $O=f+R$, then $\widehat{O}(\alpha) = \widehat{f}(\alpha) + \widehat{R}(\alpha)$.
Note that $\|R\|_\infty \leq 2\|f\|_\infty$.
Hence
\[ \left|\widehat{O}(\alpha)\right| \geq \left|\widehat{f}(\alpha)\right| - \left|\frac{1}{|G|} \sum_{x \in I}R(x) \overline{\chi_\alpha}(x)\right| \geq \left|\widehat{f}(\alpha)\right| - \frac{2|I|}{|G|} \|f\|_\infty \,.\]
As $I$ is small, if $\widehat{f}(\alpha)$ is significant then so is $\widehat{O}(\alpha)$. Note that as the reliability rate of the oracle decreases, so does the size of $\widehat{O}(\alpha)$, while other coefficients increase in size. One can see that, similarly to majority rule, more samples are needed when the reliability rate of the oracle decreases. Indeed, the number of samples is proportional to $\tau^{-1}$ and as the size of the threshold $\tau$ decreases, $\tau^{-1}$ increases.

It is well-known that the GL theorem finds the unique function in case of low noise rate, namely if the the noise rate is smaller than $\frac{1}{4}-\epsilon$. One immediately sees this from our analysis:
the original function satisfies $|\widehat{f}(s)|=1$, for the secret vector $s$, and so only one Fourier coefficient of $O$ is larger than $\frac{1}{2}$.

\subsection{Hardness of finding significant coefficients in the random access model}
\label{sec:hardness-LPN}

The SFT algorithm requires chosen queries. The aim of this section is to explain that one does not expect a general learning algorithm for problems where the function values cannot be chosen.
Indeed, we will show that if such a learning algorithm existed then the learning parity with noise (LPN) and learning with errors (LWE) problems would be easy.

Recall the LPN problem: an instance is a list of samples $(a, b = \langle a,s \rangle + e(a)) \in \Z_2^n \times \Z_2$ for some secret value $s$ and a function $e :\{0,1\}^n \to \{0,1\}$ which determines the noise.
Define $\text{LPN} :\{0,1\}^n \to \{0,1\}$  by $\text{LPN}(a) := (-1)^b$. This is a `noisy version' of the function $f(x):=(-1)^{\langle a,s \rangle}$ for which $\widehat{f}(s)$ is the only non-zero Fourier coefficient. For a small noise rate (as in LPN), as shown in Section~\ref{sec:noisy-oracle}, the coefficient $\widehat{\text{LPN}}(s)$ is a significant coefficient for this function. Hence, if one could find significant coefficients in $\{0,1\}^n$ on random samples, then one could solve LPN given the samples $(a,b)$. Since LPN is believed to be hard, one does not expect such a variant of the SFT algorithm to exist. Further evidence for the hardness of this problem in the random access model is that it is related to the problem of decoding a random binary linear code.

The same argument holds for LWE in $\Z^n_p$. In LWE one has samples $a \in \Z_p^n$ and $b = \langle a, s \rangle + e(a) \pmod{p}$ where $e(a)$ is ``small'' relative to $p$. Defining $\text{LWE}(a) := \wp^b$ one can show that the coefficient of the character $\chi_s(x) = \wp^{\langle x, s \rangle}$ is significant. Hence, if one could find the significant coefficients when given random samples, then one could solve LWE given the samples $(a,b)$. Since we  have good evidence that LWE is a hard problem, this shows that we do not expect to be able to learn significant Fourier coefficients in the random access model.

The modulus-dimension tradeoff for LWE \cite{mod-dim} shows how to transform LWE in $\Z^n_p$ to LWE in $\Z^{n/d}_{p^d}$ (albeit with a different error distribution), and so one can conclude that finding significant coefficients in $\Z_{p^n}$ on random samples is at least as hard as solving LWE in $\Z^n_p$ with binary secrets.
This is an example of the connection between $\Z_2^n$ and $\Z_{2^n}$ as explained in Remark~\ref{rem_mod-dim}.

\section{Simplifications: Modulus Switching} \label{sec:mod-switch}

The SFT algorithm is considerably simpler to understand and implement for $\Z_2^n$ or $\Z_{2^n}$ than for $\Z_p$.
Furthermore, for domains of size $2^n$, considerable effort has been invested by researchers in the engineering community into making this algorithm more efficient with respect to various measures~\cite{Fourier_survey} (see also Mansour and Sahar~\cite{ManSah}).
Hence, it is natural to try to work with functions over $\Z_{2^n}$ instead of functions over $\Z_p$.
We now sketch an approach that shows how one can transform functions on $\Z_p$ into functions on $\Z_{2^n}$ where $2^n \approx p$, while maintaining a relation between their significant coefficients.
In analogy to similar ideas in lattice cryptography we call this ``modulus switching''.

These ideas are implicit in the work of Shor~\cite{Shor} on factoring with quantum computers.
Shor extends a periodic function to a larger domain. The core idea is that if a function is periodic, then the period, which is a feature of the time domain, is preserved over any (large enough) domain. This fact is exploited by Shor, where his further ideas take place in the frequency domain.
Shor's analysis provides a clear interaction between the representation of a (periodic) function in the time and frequency domains.

We extend these ideas to show that a much larger class of functions keeps the properties of their frequency domain representation, when extending their time domain. Specifically, significant coefficients are ``preserved'' even when the time domain representation of the function is extended (by ``preserved'' we mean that there is a clear relation between the significant coefficients of both functions).
We refer to Laity and Shani~\cite{LS} for the technical details.

Let $N = 2^n > p$ be the smallest power of two greater than $p$. For a function $f:\Z_p\to\C$, we define
\[
    \widetilde{f}(x) := \left\{ \begin{array}{ll} f(x) & \text{ when } 0 \le x < p , \\
        0         & \text{ when } p \le x < N .\end{array} \right.
\]
Note that the operation $f \mapsto \widetilde{f}$ is $\C$-linear.
The basic observation (see Figure~\ref{characterMS}) is that for a character $\chi_{\alpha}$ on $\Z_p$,
$ \widetilde{\chi}_\alpha(x)$ is a function on $\Z_N$ that is also concentrated.

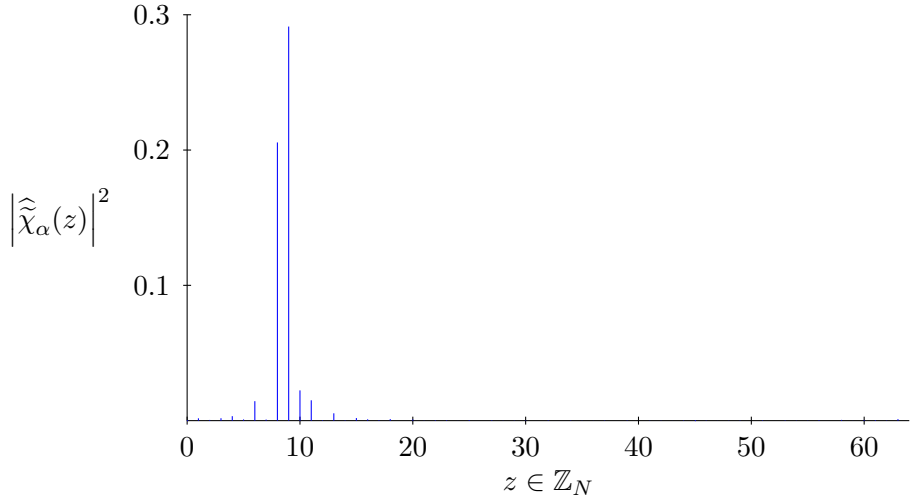
\begin{figure}[!h]
    \begin{center}
        \begin{tikzpicture}[scale = 1.5]
            \def \yscale{12}
            \def \xscale{0.1}

            \draw (0,0) -- (0,\yscale*0.3) node[pos=0.5,left=25pt] {$\left|\widehat{\widetilde{\chi}}_\alpha(z)\right|^2$};
            \draw(0,0) -- (64*\xscale ,0) node[pos=0.5,below = 15pt] {$z\in \mathbb{Z}_N$};

            \foreach \y in {0.1,0.2,0.3}
            \draw (-1pt, \y*\yscale cm) -- (1pt , \y*\yscale cm) node[left = 4pt] {$\y$};

            \foreach \x in {0,10,...,63}
            \draw (\x*\xscale cm , -1pt) -- (\x*\xscale cm, 1pt) node[below=4pt] {$\x$};

            \draw[color = blue] plot[ycomb] file{charMS.txt};
        \end{tikzpicture}
    \end{center}
    \caption{The magnitude of the Fourier coefficients  $\widehat{\widetilde{\chi}}_\alpha(z)$. Here  $p=37$, $N=64$ and $\alpha = 5$.}\label{characterMS}
\end{figure}
To explain this observation we state the following basic fact and sketch a proof of it.
It is straightforward to turn this result into a rigorous upper bound.

\begin{claim} \label{fact1}
    Let $N > 1$, $\omega_N = \mathrm{e}^{\frac{2 \pi i}{N}}$ and let $\alpha \in \R$, $\alpha \ne 0$, $|\alpha| < N/2$ and $K \in \N$. Define
    \[
        S_{\alpha, K} = \sum_{x = 0}^{K-1} \omega_N^{\alpha x } \,.
    \]
    Then
    \[
        | S_{\alpha, K} | \approx N \frac{|1 - \omega_N^{\alpha K} |}{2 \pi |\alpha| } \,.
    \]
\end{claim}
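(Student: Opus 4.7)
The plan is to evaluate the geometric sum in closed form and then approximate the denominator using the small-angle behaviour of $1 - e^{i\theta}$.

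First I would apply the formula for a finite geometric series. Since $\alpha \ne 0$ and $|\alpha| < N/2$, we have $\omega_N^{\alpha} \ne 1$, so
\[
  S_{\alpha,K} \;=\; \sum_{x=0}^{K-1} \omega_N^{\alpha x} \;=\; \frac{1 - \omega_N^{\alpha K}}{1 - \omega_N^{\alpha}} ,
\]
and therefore
\[
  |S_{\alpha,K}| \;=\; \frac{|1 - \omega_N^{\alpha K}|}{|1 - \omega_N^{\alpha}|} .
\]
So it suffices to show that $|1 - \omega_N^{\alpha}| \approx 2\pi|\alpha|/N$.

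Next I would compute the denominator exactly using the identity $|1 - e^{i\theta}| = 2|\sin(\theta/2)|$, which gives
\[
  |1 - \omega_N^{\alpha}| \;=\; \bigl|1 - e^{2\pi i \alpha / N}\bigr| \;=\; 2\,\Bigl|\sin\!\Bigl(\tfrac{\pi \alpha}{N}\Bigr)\Bigr| .
\]
Because the hypothesis $|\alpha| < N/2$ forces $|\pi\alpha/N| < \pi/2$, the argument of the sine is bounded in absolute value by $\pi/2$, so the small-angle approximation $\sin t \approx t$ is valid and yields
\[
  2\,\Bigl|\sin\!\Bigl(\tfrac{\pi \alpha}{N}\Bigr)\Bigr| \;\approx\; \frac{2\pi|\alpha|}{N} .
\]
Substituting back gives the claimed estimate
\[
  |S_{\alpha,K}| \;\approx\; \frac{|1 - \omega_N^{\alpha K}|}{2\pi|\alpha|/N} \;=\; N\,\frac{|1 - \omega_N^{\alpha K}|}{2\pi|\alpha|} .
\]

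There is no real obstacle here; the only subtle point is that the statement is phrased informally (``$\approx$''), and turning it into a rigorous two-sided bound requires using the standard inequalities $\frac{2}{\pi}|t| \le |\sin t| \le |t|$ on the range $|t|\le \pi/2$. These, combined with the identity above, give clean explicit constants relating $|S_{\alpha,K}|$ to $N|1-\omega_N^{\alpha K}|/(2\pi|\alpha|)$, which is what is needed in the applications (for instance, to obtain the bound $|S_{\alpha,K}/N| < 1/||\alpha|_N|$ used in Example~\ref{example4}).
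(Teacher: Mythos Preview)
Your proof is correct and follows essentially the same route as the paper: sum the geometric series to get $(1-\omega_N^{\alpha K})/(1-\omega_N^{\alpha})$ and then approximate the denominator by the linear term of its Taylor expansion. The only cosmetic difference is that the paper computes $|1-\omega_N^\alpha|^2 = 2(1-\cos(2\pi\alpha/N))$ and uses $1-\cos x \approx x^2/2$, whereas you use the equivalent identity $|1-e^{i\theta}| = 2|\sin(\theta/2)|$ together with $\sin t \approx t$; these are the same estimate in disguise.
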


To see this note that the geometric series  sums to $(1 - \omega_N^{\alpha K} ) / ( 1 - \omega_N^{\alpha} )$ and the denominator is $   ( 1-  \cos( 2 \pi \alpha/N)) - i \sin( 2 \pi \alpha / N )$
which has norm squared equal to $2(1 - \cos( 2 \pi \alpha / N ) )$.
Finally, since $(1  - \cos(x)) \approx x^2/2$ (indeed $\tfrac{x^2}{2}( 1 - \tfrac{x^2}{12}) \le 1 - \cos(x) \le \tfrac{x^2}{2}$), the result follows.

\bigskip
We now compute the Fourier transform of $\widetilde{\chi}_\alpha$ as a function on $\Z_N$ where $N = 2^n$.
We have
\[
    \widehat{\widetilde{\chi}}_\alpha ( \beta ) = \langle \widetilde{\chi}_\alpha, \chi_\beta \rangle = \frac{1}{N} \sum_{x=0}^{p-1} \exp\left( 2 \pi i \left( \tfrac{\alpha}{p} - \tfrac{\beta}{N} \right) x \right ).
\]
If $\tfrac{\alpha}{p} - \tfrac{\beta}{N} \ne 0$, which will be satisfied in general since $\alpha, \beta  \in \Z$ while $\gcd( p, N) = 1$, then applying Claim~\ref{fact1} gives the approximation
\[
    \left|\widehat{\widetilde{\chi}}_\alpha ( \beta ) \right|\approx \frac{|1 - \exp( 2 \pi i( \alpha/p - \beta/N))|}{2 \pi |\alpha/p - \beta/N|} \,.
\]
If $\beta \approx N \alpha / p$ then this coefficient is large and so the function $\widetilde{\chi}_\alpha$ has a significant Fourier coefficient at $\lfloor N \alpha / p \rceil$. Moreover, the size of $\widehat{\widetilde{\chi}}_\alpha(\lfloor N \alpha / p + k \rceil)$, for $0<|k|<N/2$, is bounded by $O({1}/{k})$, and so $\widetilde{\chi}_\alpha$ is concentrated in a small set $\Gamma \subseteq \Z_N$ of characters represented by values around $N \alpha / p$.

Since the maps $f \mapsto \widetilde{f}$ and $g \mapsto \widehat{g}$ are $\C$-linear, for any $f(x) = \sum_{\alpha\in G} \widehat{f}(\alpha)\chi_\alpha(x)$ we have
\[
    \widehat{\widetilde{f_{\;}}}(\beta)= \sum_{\alpha=0}^{p-1} \widehat{f}(\alpha) \widehat{\widetilde{\chi}}_\alpha ( \beta ) \,.
\]
Thus, if $\widehat{f}(\alpha)$ is a significant coefficient for $f$, then one expects that for $\beta = \lfloor N \alpha / p \rceil$, the coefficient $\widehat{\widetilde{f_{\;}}}(\beta)$ is significant for $\widetilde{f}$.
The work of Laity and Shani~\cite{LS} made these arguments to a precise theorem.

\begin{theorem}[{\cite[Theorem 1.1]{LS}}]\label{thm:LS}
    Let $\{n_k\}_{k\in\N},\{m_k\}_{k\in\N}$ two sequences of positive integers with $m_k \geq n_k/2$ for every $k\in\N$.
    Let $Q\in \R[x]$ be a polynomial.
    Let $\{f_k:\Z_{n_k}\to\C\}_{k\in\N}$ be a concentrated family of functions such that $\|f_k\|_2^2 \leq Q(\log(n_k))$ for all $k\in \N$.
    Then $\{\widetilde{f_k}:\Z_{m_k}\to\C\}_{k\in\N}$ is a concentrated family of functions.
\end{theorem}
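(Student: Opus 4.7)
The plan is to transfer the concentration of $\{f_k\}$ to $\{\widetilde{f}_k\}$ by exploiting the fact, illustrated in Figure~\ref{characterMS} and quantified by Claim~\ref{fact1}, that each extended character $\widetilde{\chi}_\alpha : \Z_{m_k} \to \C$ is itself concentrated with its Fourier mass bunched around the single frequency $\beta_\alpha := \lfloor m_k \alpha / n_k \rceil$. Given $\epsilon > 0$, I apply the concentration hypothesis on $\{f_k\}$ with parameter $\epsilon' = \epsilon/16$ to obtain $\Gamma_k \subseteq \Z_{n_k}$ of size polynomial in $\log(n_k)/\epsilon$ satisfying $\|f_k - g_k\|_2^2 \leq \epsilon'$, where $g_k := f_k|_{\Gamma_k}$. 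The candidate concentration set on the larger group is
\begin{equation*}
 \widetilde{\Gamma}_k := \bigcup_{\alpha \in \Gamma_k} W_\alpha, \qquad W_\alpha := \bigl\{\beta \in \Z_{m_k} : |\beta - \beta_\alpha|_{m_k} \leq T \bigr\},
\end{equation*}
where $T$ is to be chosen. Then $|\widetilde{\Gamma}_k| \leq (2T+1)|\Gamma_k|$, so the goal is to pick $T$ polynomial in $\log(m_k)/\epsilon$ while keeping the approximation error under control.

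The first observation is a norm-doubling bound: since $m_k \geq n_k/2$, any $h:\Z_{n_k} \to \C$ satisfies $\|\widetilde{h}\|_2^2 = (n_k/m_k)\|h\|_2^2 \leq 2\|h\|_2^2$, directly from the averaged definition of $\|\cdot\|_2$. By $\C$-linearity of extension, $\widetilde{f}_k = \widetilde{g}_k + \widetilde{(f_k - g_k)}$, and since Parseval gives $\|h - h|_\Gamma\|_2 \leq \|h\|_2$ for every restriction, the triangle inequality yields
\begin{equation*}
 \bigl\|\widetilde{f}_k - \widetilde{f}_k|_{\widetilde{\Gamma}_k}\bigr\|_2 \leq \bigl\|\widetilde{g}_k - \widetilde{g}_k|_{\widetilde{\Gamma}_k}\bigr\|_2 + \bigl\|\widetilde{(f_k - g_k)}\bigr\|_2 \leq \bigl\|\widetilde{g}_k - \widetilde{g}_k|_{\widetilde{\Gamma}_k}\bigr\|_2 + \sqrt{2\epsilon'} .
\end{equation*}
For the ``structured'' term I expand $\widetilde{g}_k = \sum_{\alpha \in \Gamma_k} \widehat{f}_k(\alpha)\,\widetilde{\chi}_\alpha$ and apply the triangle inequality once more:
\begin{equation*}
 \bigl\|\widetilde{g}_k - \widetilde{g}_k|_{\widetilde{\Gamma}_k}\bigr\|_2 \leq \sum_{\alpha \in \Gamma_k} |\widehat{f}_k(\alpha)|\,\bigl\|\widetilde{\chi}_\alpha - \widetilde{\chi}_\alpha|_{W_\alpha}\bigr\|_2.
\end{equation*}
Parseval on $\Z_{m_k}$ rewrites each tail as $\sum_{\beta \notin W_\alpha} |\widehat{\widetilde{\chi}}_\alpha(\beta)|^2$, and Claim~\ref{fact1}, applied to the geometric sum $\widehat{\widetilde{\chi}}_\alpha(\beta) = \tfrac{1}{m_k}\sum_{x=0}^{n_k-1} \exp(2\pi i(\alpha/n_k - \beta/m_k)x)$, delivers the decay $|\widehat{\widetilde{\chi}}_\alpha(\beta)| = O(1/|\beta - \beta_\alpha|_{m_k})$ for $\beta \neq \beta_\alpha$, so each tail is $O(1/T)$. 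Cauchy--Schwarz in $\alpha$ gives $\sum_{\alpha \in \Gamma_k}|\widehat{f}_k(\alpha)| \leq \sqrt{|\Gamma_k|\,\|f_k\|_2^2}$, whence the structured term is $O\bigl(\sqrt{|\Gamma_k|\,\|f_k\|_2^2/T}\bigr)$. Taking $T$ of order $|\Gamma_k|\,\|f_k\|_2^2/\epsilon$ -- which by the hypotheses $|\Gamma_k| \leq P(\log(n_k)/\epsilon)$ and $\|f_k\|_2^2 \leq Q(\log n_k)$, together with $n_k \leq 2m_k$, is polynomial in $\log(m_k)/\epsilon$ -- yields $\|\widetilde{f}_k - \widetilde{f}_k|_{\widetilde{\Gamma}_k}\|_2^2 \leq \epsilon$, as required for concentration.

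The main obstacle is turning the heuristic Claim~\ref{fact1} into a rigorous uniform inequality: what is needed is a clean upper bound $|\widehat{\widetilde{\chi}}_\alpha(\beta)| \leq C/|\beta - \beta_\alpha|_{m_k}$ valid across all $\beta \in \Z_{m_k} \setminus \{\beta_\alpha\}$, obtained from the identity $|1 - e^{2\pi i\theta}| = 2|\sin(\pi\theta)|$ and the two-sided bound on $1 - \cos$ sketched after Claim~\ref{fact1}, with the argument $\theta = \alpha/n_k - \beta/m_k$ first reduced modulo $1$ so that the quantity in the denominator is genuinely the wrap-around distance. A secondary book-keeping point is that the windows $W_\alpha$ for distinct $\alpha \in \Gamma_k$ may overlap, but this only tightens the bound on $|\widetilde{\Gamma}_k|$ and so causes no trouble.
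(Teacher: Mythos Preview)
Your proposal is correct and follows exactly the approach sketched in the paper, which itself defers the full proof to \cite{LS}: pass to the truncated sum $g_k = f_k|_{\Gamma_k}$, use linearity of $f \mapsto \widetilde{f}$ together with the decay bound from Claim~\ref{fact1} to show each $\widetilde{\chi}_\alpha$ concentrates in a window around $\lfloor m_k\alpha/n_k\rceil$, and take the union of these windows as $\widetilde{\Gamma}_k$. One small caveat: your identity $\|\widetilde{h}\|_2^2 = (n_k/m_k)\|h\|_2^2$ and the Fourier formula with summation up to $n_k-1$ are literally valid only when $m_k \ge n_k$; in the permitted case $n_k/2 \le m_k < n_k$ the upper limit is $m_k-1$ instead, but the same geometric-sum estimate applies and the rest of the argument is unaffected.
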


Specifically, if $f(x)$ is a concentrated function on $\Z_p$ then $\widetilde{f}(x)$ is a concentrated function on $\Z_{2^n}$. A similar result holds where $f$ is $\epsilon$-concentrated. We refer to~\cite{LS} for the technical details.

As a consequence, one sees that it is not necessary to develop a variant of the SFT algorithm for the group $\Z_p$. Instead one can simply modulus-switch to a power of two and apply the SFT algorithm for the group $\Z_{2^n}$. This is addressed in \cite[Section 6.1]{LS}.
Since the algorithms for $\Z_{2^n}$ have been optimised significantly (see~\cite{Fourier_survey,ManSah}) we believe that the resulting algorithms will be no less efficient than applying the AGS algorithm directly. Moreover, unlike the complexities working directly over $\Z_p$ as explained in Remark \ref{rem_intervals}, this technique (although it might introduce new ``noise'') overcomes the need to take overlapping intervals and is not subject to the choice of the interval.


\subsection{The $i$-bit function is concentrated} \label{sec:i-th-bit}

We now explain that modulus switching provides an alternative proof of the Morillo--R{\` a}fols result that every single-bit functions is concentrated~\cite{MR}.

The above discussion assumed the function $\widetilde{f}$ extends $f$ from $\Z_p$ to $\Z_{2^n}$ where $2^n$ is slightly larger than $p$. As Theorem \ref{thm:LS} shows, one can consider modulus switching for domains of any size, including switching to a smaller domain.
The results about concentration hold in this greater generality, and this provides a new technique to prove concentration of (some) families of functions, by showing that a subfamily of functions, defined on domains of specific forms, is concentrated.

\begin{theorem}[{\cite[Theorem 6.1]{LS}}]\label{thm:mod-switch}
    Consider a family of functions $\famly = \{ f_{2^k}:\Z_{2^k} \to \C \}_{k\in\N}$ and define the family $\famly' = \{ f_n:\Z_n \to \C \}_{n\in\N}$, where for each $2^{k-1} < n \leq 2^k$ we let $f_n(x) := f_{2^k}(x)$ for every $x\in\Z_n$.
    If $\famly$ is concentrated then $\famly'$ is concentrated.
\end{theorem}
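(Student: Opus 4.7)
The plan is to derive the theorem as a direct application of Theorem~\ref{thm:LS}. For each $n\in\N$ set $k(n):=\lceil\log_2 n\rceil$, so that $2^{k(n)-1} < n \leq 2^{k(n)}$; in particular $n \geq 2^{k(n)}/2$, which is precisely the hypothesis on the pair of moduli required by Theorem~\ref{thm:LS}. The construction $f_n(x) := f_{2^{k(n)}}(x)$ for $x\in\Z_n$, read through the canonical representatives $\{0,1,\ldots,n-1\} \subset \{0,1,\ldots,2^{k(n)}-1\}$, is exactly the modulus-switching operation $g \mapsto \widetilde{g}$ of Section~\ref{sec:mod-switch} in its \emph{downsizing} incarnation: rather than extending by zero (as when passing from $\Z_p$ to $\Z_{2^n}$ with $p < 2^n$), one truncates to the initial segment, which is the natural analogue when the new modulus is smaller than the original.

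Given that $\famly$ is concentrated, apply Theorem~\ref{thm:LS} to the pair of sequences $n_k = 2^{k(n)}$ and $m_k = n$ (or, equivalently, to the re-indexed family $\{f_{2^{k(n)}}\}_{n\in\N}$, which is still concentrated and, under the mild side-hypothesis on $\|f_{2^k}\|_2^2$ required by the theorem, polynomially bounded in the $\ell_2$ norm). The conclusion is that the switched family $\{\widetilde{f_{2^{k(n)}}} : \Z_n \to \C\}_{n\in\N} = \famly'$ is concentrated: there is a polynomial $Q$ (depending only on the polynomial witnessing concentration of $\famly$) and, for every $\epsilon > 0$ and every $n$, a set $\Gamma_n \subseteq \widehat{\Z_n}$ with $|\Gamma_n| \leq Q(\log n / \epsilon)$ and $\|f_n - f_n|_{\Gamma_n}\|_2^2 \leq \epsilon$. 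This is exactly the assertion that $\famly'$ is concentrated.

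The main obstacle is to check cleanly that Theorem~\ref{thm:LS} genuinely applies in the downsizing direction: the exposition in Section~\ref{sec:mod-switch} is carried out for upsizing (from $\Z_p$ to $\Z_{2^n}$ with $p < 2^n$), so one has to verify that the underlying Fourier calculation behind the theorem is symmetric. Concretely, the heart of the argument is the decomposition $\widehat{\widetilde{f_{\;}}}(\beta) = \sum_\alpha \widehat{f}(\alpha)\,\widehat{\widetilde{\chi}}_\alpha(\beta)$, combined with the concentration of each switched character $\widetilde{\chi}_\alpha$ around $\beta \approx \lfloor m\alpha/n \rceil$ (Claim~\ref{fact1} and Figure~\ref{characterMS}). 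These inputs depend only on a geometric-series estimate that is insensitive to whether $m > n$ or $m < n$, provided only that $m \geq n/2$, which is exactly the hypothesis appearing in Theorem~\ref{thm:LS}. Once this is in hand, the significant coefficients of $f_n$ cluster around the images of the significant coefficients of $f_{2^{k(n)}}$ under $\alpha \mapsto \lfloor n\alpha/2^{k(n)}\rceil$, yielding the required polynomial-size approximating set for every $n$ and closing the argument.
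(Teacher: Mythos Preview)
The paper does not supply its own proof of this statement; it is quoted from \cite[Theorem~6.1]{LS} and used as a black box. Your plan---applying Theorem~\ref{thm:LS} in the downsizing direction with the pair $(2^{k(n)},n)$---is exactly the right idea and is almost certainly how \cite{LS} itself derives its Theorem~6.1 from its Theorem~1.1. Your observation that the geometric-series estimate behind Claim~\ref{fact1} is insensitive to the sign of $m-n$ (needing only $m \ge n/2$) is the key point, and it is correct.

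One issue deserves more than a parenthesis. Theorem~\ref{thm:LS} carries the hypothesis $\|f_k\|_2^2 \le Q(\log n_k)$ for some polynomial $Q$, which you call a ``mild side-hypothesis.'' Concentration by itself does not imply such a bound (for instance $f = c\,\chi_\alpha$ is concentrated for every $c$), and the statement you are asked to prove does not carry it. Either the full statement in \cite{LS} includes this hypothesis and the present paper has silently suppressed it, or it is being tacitly assumed because the intended application is to the bit functions, where $\|f\|_\infty = 1$. In either case you should state the assumption explicitly rather than absorbing it into a parenthetical aside; as written, your argument does not literally prove the theorem as stated.
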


As an application, one can prove that the $i$-th bit function is concentrated by showing that the family of the $i$-th bit function on domains $\Z_{2^k}$ is concentrated, that is, that $\{\bit_i : \Z_{2^k} \to \{-1,1\}\}_{i<k\in\N}$ is concentrated. Here $i$ can be a function of $k$, so for example the most-significant-bit function is given by $i=k-1$. The latter can be easily proven using the structure of these functions under these domains. This is summarized in the following lemma, where we define $|x|_N := \min \{x,N-x\}$.

\begin{lemma}[{\cite[Lemma 6.2]{LS}}]
    Let $k \in \N$ and $0 \le i < k$.
    Define $\bit_i : \Z_{2^k} \to \{-1,1\}$ by
    $\bit_i(x) = (-1)^{x_i}$ where $x = \sum_{j=0}^{k-1} x_j 2^j$ and $x_j \in \{ 0,1 \}$.
    Let $\alpha \in \Z_{2^k}$.
    Then $\widehat{\bit_i}(\alpha) = 0$ unless $\alpha$ is an odd multiple of $2^{k-i-1}$ in which case $| \widehat{\bit_i}(\alpha)| = O( 2^{k-i}/|\alpha|_{2^k} )$.
\end{lemma}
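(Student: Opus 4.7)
The plan is to exploit the block structure of $\bit_i$ by a change of summation variable, reducing the computation of $\widehat{\bit_i}(\alpha)$ to the product of two geometric sums. Writing the summation index $x = 2^i q + s$ with $0 \le s < 2^i$ and $0 \le q < 2^{k-i}$, one has $x_i \equiv q \pmod{2}$, so $\bit_i(x) = (-1)^q$ depends only on $q$. The Fourier coefficient therefore factors as
$$\widehat{\bit_i}(\alpha) = \frac{1}{2^k}\left(\sum_{q=0}^{2^{k-i}-1} (-1)^q \omega_{2^k}^{-\alpha 2^i q}\right)\left(\sum_{s=0}^{2^i-1} \omega_{2^k}^{-\alpha s}\right).$$

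The next step is to determine when the outer ($q$) sum is nonzero. Using $-1 = \omega_{2^k}^{2^{k-1}}$, it is a geometric series with ratio $\omega_{2^k}^{2^{k-1} - \alpha 2^i}$. Because $i < k$, raising this ratio to the $2^{k-i}$-th power always gives $1$, so the numerator of the closed form vanishes and the sum is zero unless the ratio itself equals $1$. That happens precisely when $\alpha 2^i \equiv 2^{k-1} \pmod{2^k}$, which a short manipulation shows is equivalent to $\alpha$ being an odd multiple of $2^{k-i-1}$; in that case the sum collapses to $2^{k-i}$. This establishes the vanishing claim and isolates the nontrivial coefficients.

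For the magnitude bound, assume $\alpha = 2^{k-i-1}(2j+1)$. Then $\alpha 2^i \equiv 2^{k-1} \pmod{2^k}$, so $\omega_{2^k}^{-\alpha 2^i} = -1$, and the inner ($s$) geometric sum evaluates to $2/(1 - \omega_{2^k}^{-\alpha})$. Combining gives $|\widehat{\bit_i}(\alpha)| = 1/(2^i |\sin(\pi \alpha/2^k)|)$, and the Jordan-type inequality $|\sin(\pi t / 2^k)| \geq 2|t|_{2^k}/2^k$ (which is essentially the content of Claim~\ref{fact1}) yields $|\widehat{\bit_i}(\alpha)| = O(2^{k-i}/|\alpha|_{2^k})$. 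The only delicate step is the bookkeeping in the second paragraph: correctly identifying when the outer geometric series collapses to zero versus to $2^{k-i}$, and translating the resulting congruence into the clean statement ``odd multiple of $2^{k-i-1}$''. The remaining calculations are routine geometric-series evaluations and an application of Claim~\ref{fact1}.
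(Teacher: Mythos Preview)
Your argument is correct. The factorisation $x = 2^i q + s$ cleanly separates the alternating sign from the character, and your analysis of the $q$-sum (ratio $\omega_{2^k}^{2^{k-1}-\alpha 2^i}$, numerator always vanishing, so the sum is $0$ unless the ratio is $1$) correctly pins down the support as the odd multiples of $2^{k-i-1}$. The magnitude computation is also right: for such $\alpha$ one gets $|\widehat{\bit_i}(\alpha)| = 1/(2^i\,|\sin(\pi\alpha/2^k)|)$, and the Jordan bound $|\sin(\pi\alpha/2^k)| \ge 2|\alpha|_{2^k}/2^k$ yields the stated $O(2^{k-i}/|\alpha|_{2^k})$. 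The edge case $i=0$ (where the $s$-sum is the single term $1$) is consistent with your closed form since then $\omega_{2^k}^{-\alpha}=-1$.

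There is nothing to compare against in the paper itself: the lemma is quoted from \cite{LS} without proof, and the surrounding text only discusses its consequences. Your direct geometric-series computation is the natural proof and is presumably what \cite{LS} does as well. One small remark: what you invoke is not literally Claim~\ref{fact1} (which estimates a truncated geometric sum) but rather the elementary inequality $|1-e^{i\theta}| = 2|\sin(\theta/2)| \ge (2/\pi)|\theta|$ for $|\theta|\le\pi$; it would be cleaner to state that directly rather than appeal to Claim~\ref{fact1}.
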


The lemma shows that, when $i$ is small there are a few non-zero coefficients (especially for $i=0$, there is only one non-zero coefficient at $\alpha = 2^{k-1}$).
When $i$ is ``medium'' then there are non-zero coefficients at all multiples $\alpha = j 2^{k-i-1}$, $j$ odd, and they decrease in size with $1/|j|_{2^k}$.
When $i$ is large (e.g., $i = k-1$) then the significant coefficients are all close to $0$ and are spaced at distance $2\cdot2^{k-1-i}$ (i.e., when $i = k-1$ they are 2 apart; for the second most significant bit they are spaced 4 apart, and so on).

A corollary is that the $i$-th bit function on $\Z_{2^k}$ is concentrated. See arguments on the function $\half$ in Example~\ref{example4} and~\cite[Claim 4.1]{AGS}. For clarification, we state again that $i$ can be a fixed constant ($i=0$ corresponds to the least significant bit) or be dependent on $k$ ($i=k-1$ corresponds to the most significant bit).

Having established that the $i$-th bit function is concentrated on $\Z_{2^k}$, our modulus switching approach shows that the $i$-th bit function (on any domains $\Z_N$) is concentrated by Theorem~\ref{thm:mod-switch}. This general approach gives a new and simpler proof of the result in~\cite{MR} (the proof in~\cite{MR} is very technical; they decompose $N = k 2^i \pm m$ and consider different cases of $m$).

\section{Applications: Cryptography}
\label{sec_applications}

The SFT algorithm has been used to reprove known results on the hardness of recovering bits of the secret values in the discrete logarithm problem (DLP) and RSA problem. It has been used to give reductions for the learning with errors (LWE)~\cite{LWE} and learning with rounding (LWR)~\cite{LWR} problems, that prove that the `search' and `decision' problems are equivalently hard even when the number of samples is fixed. It has also been used to prove results about the hardness of recovering bits of Diffie--Hellman shared secrets keys in both (non-prime) finite fields and elliptic curves. This section surveys how the SFT algorithm is used in these applications. In addition, we explain the specific model for which the Diffie--Hellman results hold, and clarify that the question whether single bits of Diffie--Hellman shared keys are hardcore (in the usual model) is still open.


\subsection{Background and motivation}\label{sec:HNP}

A one-way function $h$, if it exists, assures that while given $x$ it is easy to compute $h(x)$, retrieving $x$ from $h(x)$ is hard. This hardness does not necessarily mean that given $h(x)$ one cannot find some partial information of $x$. Naturally, the main interest is in trying to learn some bits of $x$, but other sorts of partial information have also been considered. Bits of $x$ that cannot be learnt from $h(x)$, or more generally cannot be predicted noticeably better than a guess, are called \emph{hardcore bits}. In other words, a hardcore bit is a bit which is as hard to compute (or to predict) as the entire secret value.
For a historical overview see \cite{HardCoreSurvey}.
To show that a bit (or a set of bits) is hardcore, one usually tries to construct an algorithm that inverts $h$, given a target value $h(x)$ and an oracle that takes $h(t)$ and outputs a bit of $t$. In order to do so, one first needs to establish a way to query the oracle on values $h(t)$ such that there is some known relation between $t$ and $x$, for example $t = \alpha x$ for known $\alpha$'s.

A useful language to describe these ideas is the \emph{hidden number problem}, which was introduced by Boneh and Venkatesan \cite{BV} in order to study bit security of secrets keys arising from Diffie--Hellman key exchange. This problem turned out to be general enough to be applied to other cryptographic problems like DLP and RSA. In fact, the generality of the problem allows it to be used also outside of the scope of bit security (see \cite[Section 4.4]{2faces} and references within, also \cite{De-Mulder, GLV/GLS}). Therefore, the hidden number problem is of theoretical interest and is studied today in its own right. It has many extensions and different variants; see \cite{Shparlinski} for a comprehensive survey.

\begin{definition} [Hidden number problem]
    \label{HNP_def}
    Let $(G,\cdot)$ be a group, let $s \neq 0$ be a secret (unknown) element of $G$ and let $f$ be a function defined over $G$. Find $s$ using oracle access to the function $f_s (x) := f(s \cdot x)$.
\end{definition}

We use the term \emph{oracle access} as a general term for either of the following oracle models: in the \emph{random access} model the solver receives polynomial many samples $(x, f_s(x))$ where the values $x$ are drawn independently and uniformly at random from $G$; in the \emph{query access} model the solver can query the oracle on any input $x \subseteq G$ and receive the answer $(x, f_s(x))$. To emphasize the difference between these models, we refer to the hidden number problem in the latter model as chosen-multiplier hidden number problem (CM-HNP). This problem can also be divided into two models, namely \emph{adaptive access} where the solver has a continuous access to the oracle and can query it at any time of the recovery process, and \emph{non-adaptive access} where the solver is not allowed to query the oracle once the recovery process has started. Other types of access models could be also considered. For example, the original work on the hidden number problem~\cite{BV} considers an oracle for which on the query $x \in \Z_p$ replies with $(x, f(s g^x))$.

An interesting case is when the oracle is unreliable. That is, the oracle does not give a correct answer all the time, but with some probability. It is common to call an oracle that always provides a correct answer a \emph{perfect} oracle. An oracle that is correct only with some noticeable advantage is called an \emph{unreliable} or \emph{imperfect} oracle.

The following table summarizes some of the known results on the hidden number problem in different models. Here $p$ is a prime number and `imperfect' under the `Oracle' column refers to an oracle with any non-negligible advantage over trivial guessing.
The starting point of this work is the Boneh--Venkatesan result~\cite{BV} which requires a perfect oracle and uses lattice methods rather than Fourier learning methods; this work was adapted to unreliable oracles by \cite{GVNS}, but there is a complex  tradeoff with the number of bits and so we do not include it in our table.

\begin{center}
    \begin{tabular}{ | l | l | l | p{4.8cm} | l | l |}
        \hline
        Problem & Access       & Group    & Bits                                            & Oracle    & Remarks                        \\ \hline
        HNP     & random       & $\Z^*_p$ & $\sqrt{\log p} + \log\log p$ MSB\footnotemark   & perfect   & Given by \cite{BV}             \\ \hline
        CM-HNP  & adaptive     & $\Z^*_p$ & LSB                                             & imperfect & Given by \cite{ACGS}           \\ \hline
        CM-HNP  & adaptive     & $\Z^*_p$ & any single bit                                  & imperfect & Given by \cite{HN}             \\ \hline
        CM-HNP  & non-adaptive & $\Z^*_N$ & MSB \& LSB                                      & imperfect & Given by \cite{Bleichenbacher} \\ \hline
        CM-HNP  & non-adaptive & $\Z^*_N$ & each single bit for the outer $\log\log p$ bits & imperfect & Given by \cite{AGS}            \\ \hline
        CM-HNP  & non-adaptive & $\Z^*_N$ & any single bit                                  & imperfect & Given by \cite{MR}             \\ \hline
    \end{tabular}
\end{center}
\footnotetext{Since one can easily transform HNP with the LSB function to HNP with the MSB function, HNP can also be solved given $\sqrt{\log p} + \log\log p$ LSB. A generalization of this technique \cite[Section 5.1]{NguShpa} allows to transform HNP with $2d$ consecutive inner bits to HNP with $d$ MSB, hence HNP can also be solved given $2(\sqrt{\log p} + \log\log p)$ consecutive inner bits.}

Most early works such as~\cite{ACGS,Bleichenbacher,HN} require complicated algebraic manipulations such as tweaking and untweaking bits. Using the SFT algorithm~\cite{AGS} gives a uniform and clear approach. We present this solution to CM-HNP, using different terminology than the original one, for functions of norm $1$, as the subsequent applications involve single bit functions (with the convention that $\bit_i(x) = (-1)^{x_i}$ where $x_i$ is the $i$-th bit of $x$).

\begin{theorem} [\cite{AGS}]
    \label{HNP_thm}
    Let $f: \Z_N \to \{-1,1\}$ be a function with a $\tau$-heavy Fourier coefficient $\alpha\in\Z^*_N$ for $\tau^{-1}=poly(\log|G|)$. Then, the chosen-multiplier hidden number problem in $\Z_N$ with $s \in \Z^*_N$ and the function $f$ can be solved in polynomial time.
\end{theorem}
In particular, the theorem holds for every concentrated function.

\begin{remark}[Coding Theory terminology]
    Theorem \ref{HNP_thm} rephrases Theorem $2$ of \cite{AGS}. The latter work gives a polynomial time list-decoding algorithm for \emph{concentrated codes} with corrupted code words (Theorem $1$) and subsequently a general list-decoding methodology for proving hardcore functions (Theorem $2$). Most subsequent works on hardcore bits adopt this coding-theoretic language.
    Thus, in order to apply Theorem $2$ of \cite{AGS}, these works use Theorem $1$ of \cite{AGS}, which applies to concentrated codes. This caused the authors of these works to put effort into proving that a particular code is concentrated. However, we emphasize that to apply the CM-HNP approach of~\cite{AGS} there is no need for the function to be concentrated.
    Instead it suffices that the function has a significant Fourier coefficient, and this is usually much easier to prove.
    We make this clear in our formulation of Theorem~\ref{HNP_thm}.
    In other words, while concentration is sufficient for a code to be \emph{recoverable} it is not a necessary condition.
    For these reasons (and others) we find the coding-theoretic language unhelpful and do not use it in this paper.
\end{remark}

We now sketch the proof of Theorem~\ref{HNP_thm}: run the SFT algorithm on $f$ and $f_s$ to get short lists $L,L_s$ of $\tau$-heavy coefficients for each function, respectively. By the scaling property $\widehat{f_s}(\alpha) = \widehat{f}(\alpha s^{-1})$ for every $\alpha$. Therefore, for every $\alpha \in L_s$ for which $\widehat{f_s}(\alpha)$ is $\tau$-heavy there exists $\beta \in L$ such that $\beta = \alpha s^{-1}$. The secret $s$ can be recovered efficiently. Notice that while the hidden number problem takes place in a multiplicative group, this solution involves Fourier analysis over an additive group.

%


\medskip

A template for algorithms for CM-HNP is the following:
show that $(i)$ the ``partial information'' function $f$ has a significant coefficient, $(ii)$ the function $f_s$ has a significant coefficient, and $(iii)$ some (recoverable) relation between the coefficients of $f$ and $f_s$ exists. If one succeeds in showing these conditions, then using the SFT algorithm one can solve this instance of CM-HNP.
This template allows bit security researchers to look for settings where a solution to CM-HNP is already known (namely, cases where these three conditions are already known to hold, like single-bit functions over $\Z_N$) and try to convert their problem of interest to this setting.


\subsubsection{The multivariate hidden number problem}

Another case of interest is the \emph{multivariate hidden number problem} (MVHNP), which we define as follows.
\begin{definition} [Multivariate hidden number problem]
    \label{MVHNP_def}
    Let $R$ be a ring, let $\textbf{s}=(s_1,\dots,s_m)\ne (0,\dots,0)$ be a secret (unknown) element in $R^m$ and let $f$ be a function defined over $R$. Find $\textbf{s}$ using oracle access to the function $f_\textbf{s}(\textbf{x}) := f (\textbf{s} \cdot \textbf{x}) = f (s_1 x_1 + \cdots + s_m x_m$).
\end{definition}

Specific instances of this problem are LWE and LWR, and it is related to \emph{trace-HNP}~\cite{trace-HNP} and \emph{polynomial HNP}~\cite{Poly-HNP}. Similar to the solution to HNP in $\Z_p$, one can give a solution in $\Z_{p^m}$ in the random access model for a function $f$ that outputs $\sqrt{\log(p^m)} = \sqrt{m\log(p)}$ MSB's of its input (derived from~\cite{Poly-HNP}, for example). 

One can also define CM-MVHNP, the chosen-multiplier version of the multivariate hidden number problem, similar to CM-HNP. To solve this variant we need an analogue of the Fourier scaling property in higher dimensions. Such an analogue, which we call the multivariate scaling property, is given in~\cite[Lemma 13]{MVHNP} and we sketch it now.

\noindent\emph{Multivariate scaling property.} Let $f: \Z_p \rightarrow \C$, let $\textbf{s}=(s_1,\dots,s_m) \in \Z_p^m$ such that not all $s_i = 0$, and define $f_\textbf{s}: \Z_p^m \rightarrow \C$ by $f_\textbf{s}(\textbf{x}) := f (\textbf{s} \cdot \textbf{x})$. For any $s_k \neq 0$, the Fourier transform of $f_\textbf{s}$ satisfies
\begin{align*}
    \label{eq:MVscaling}
    \widehat{f_\textbf{s}}\left(\mathbf{z}\right) = \widehat{f_\textbf{s}}(z_1,\dots,z_m) = \left\{
    \begin{array}{l l}
        \widehat{f}(c) & \quad \mathrm{if} \ \ (z_1,\dots,z_m) = (cs_1,\dots,cs_m), \ \  c \in \Z_p \,; \\
        0              & \quad \mathrm{otherwise.}
    \end{array}
    \right.
\end{align*}

This allows generalizing Theorem \ref{HNP_thm} to CM-MVHNP. The proof, which we omit, follows from the proof to Theorem~\ref{HNP_thm} given above.

\begin{theorem} [\cite{MVHNP}]
    \label{MVHNP_thm}
    Let $f: \Z_p \to \{-1,1\}$ be a function with a $\tau$-heavy Fourier coefficient $\alpha\in\Z^*_p$ for $\tau^{-1}=poly(\log|G|)$. Then, the chosen-multiplier multivariate hidden number problem in $\Z_p^m$ with the function $f$ can be solved in polynomial time.
\end{theorem}

\subsection{Applications}\label{sec:applications}

We present some of the applications in cryptography of the SFT algorithm. They are all based on reducing some problems to the CM-HNP or CM-MVHNP. In the following we assume to have an oracle $O$ that solves some problem, and show how to use this oracle to solve a harder problem, thus establishing the hardness equivalence between the two problems.


\subsubsection{Proving known results: bit security of RSA and DLP}

The first application of the algorithm was given in~\cite{AGS}, where it is shown that the most significant bit and least significant bit are hardcore for the RSA function $RSA_{N,e}(x) := x^e \pmod{N}$ and for exponentiation $EXP_{g}(x) := g^x$, where $g$ is an element of prime order $\ell$ in some group. The results hold for imperfect oracles that have noticeable advantage over guessing. These results were already known, as \cite{ACGS} first shows that the LSB is hardcore for the RSA function and \cite{HN} shows that every bit is hardcore for both functions. Nevertheless, the approach based on SFT is more general (holds for every function with significant coefficients) and simpler. We explain how to derive these results.

\begin{claim}
    Each single bit is hardcore for the RSA function. That is, predicting any bit of $x$, given $RSA_{N,e}(x)$, is as hard as inverting the RSA function.
\end{claim}

We sketch the proof: One direction is trivial. In the other direction, given an instance $RSA_{N,e}(x) = x^e \pmod{N}$, we want to recover $x$. Suppose the (imperfect) oracle $O$ takes $RSA_{N,e}(t) := t^e \pmod{N}$ and outputs $\bit_i(t)$, the $i$-th bit of $t$. Since the values $e,N$ are public in the RSA setting, for every number $r$ one can compute $RSA_{N,e}(rx \pmod{N})$ by $(r^e  \pmod{N})(x^e \pmod{N}) = (rx)^e \pmod{N}$. Hence, given $RSA_{N,e}(x)$ one can query the oracle on $RSA_{N,e}(rx)$ to get the $i$-th bit of $rx$ for every chosen $r$.
The problem therefore becomes the CM-HNP in $\Z_N^*$, and this can be solved using the SFT algorithm over the additive group $(\Z_N, + )$, which has known order. Indeed, $\bit_i$ is concentrated (see Section~\ref{sec:i-th-bit}), thus has a significant coefficient. The oracle function $O$ also has a significant coefficient (see Section~\ref{sec:noisy-oracle}). The rest follows from Theorem~\ref{HNP_thm}.

\begin{claim}
    Each single bit is hardcore for the exponentiation function $EXP_g$ for prime-order element $g$. That is, predicting any bit of $x$, given $EXP_{g}(x)$ is as hard as inverting the function $EXP_g$, i.e. solving DLP in the corresponding group.
\end{claim}

The proof, which we leave as an exercise, is similar to the previous case, using the fact that $(g^x)^r = g^{rx}$.
This proves bit security results for the DLP in finite fields and elliptic curves.
Similar results also hold for other functions (problems), as \emph{Rabin} (see~\cite[Chapter 7]{AKA-thesis}) and the Paillier trapdoor permutation (see \cite[Section 7]{MR}).

\subsubsection{Bit security of the Diffie--Hellman protocol and related schemes}

An open question is to prove that single bits of Diffie--Hellman keys are hardcore. Here we consider an oracle $O$ that on $g,g^a,g^b$ returns a single bit of the Diffie--Hellman key $s = g^{ab}$. To interact with the oracle, notice that given $g^b$ one can compute $g^{b+r} = g^b g^r$ for any $r$. One can then query the oracle $O$ with $g,g^a,g^{b+r}$ and receive a bit of $g^{a(b+r)} = g^{ab} g^{ar} = s t$.
This is how the hidden number problem was originally identified.
This interaction does not correspond to the CM-HNP, since choosing the multiplier $t = (g^a)^r$ (for the secret $s$) is equivalent to finding discrete logarithms for the base $g^a$ in $\Z^*_p$.

\noindent \textbf{Advice bits.} For related schemes where the exponent $a$ is fixed (unlike schemes using ephemeral exponents, as in Diffie-Hellman key exchange), Akavia~\cite{AKA-HNP} followed Boneh--Venkatesen~\cite{BV2} to get around this problem by assuming an ``advice'' that provides the discrete logarithms of the chosen multipliers $t$ to the base $g^a$, but this is not realistic in actual applications (see also our remark in Section \ref{sec_subgroups}).
There is currently no method known to prove the hardness of single bits of Diffie--Hellman keys in the usual model.


\noindent \textbf{New Diffie--Hellman model.} To overcome this problem, Boneh and Shparlinski~\cite{BS} suggested (in the context of elliptic curves) a different model where the oracle $O$ takes as input, in addition to the values $g, g^a, g^b$, a group homomorphism $\phi: G \to G'$, and then outputs partial information (e.g. a single bit) of $\phi(g^{ab})$. 
The approach is then to keep the inputs $g,g^a,g^b$ fixed and to use $\phi$ as the way to choose multipliers for $s=g^{ab}$ in the hidden number problem. This model corresponds to a variant of Diffie--Hellman key exchange, where a representation of the group is not fixed.
We call this the \emph{representation changing model}.

This is an example of our discussion at the end of Section~\ref{sec:HNP} above on converting a given problem to a setting (a new model, in this case) that allows to apply the solution to CM-HNP. We now explain how in this model one can reduce the original problem to variants of CM-HNP.

In this model one can think of the bit security problem for any secret element $s$ (not necessarily a Diffie--Hellman key as the interaction with the oracle does not come from the key exchange setting). Let $\textbf{s} = (s_1,\ldots,s_n) \in G$ and write $\phi(\textbf{s})=(\phi_1(\textbf{s}),\ldots,\phi_n(\textbf{s})) \in G'$, and suppose that the oracle $O$ returns a bit of some component $\phi_i(\textbf{s})$. Write also $\textbf{r}=(r_1,\ldots,r_n)$.
Suppose there exists a family of homomorphisms $\phi^\textbf{r}$ for every\footnote{It is sufficient that there is a `large enough' subfamily of homomorphisms.} $\textbf{r}$ such that for some $1 \leq i \leq n$ the $i$-th component of $\phi^\textbf{r}$ satisfies $\phi^\textbf{r}_i(\textbf{x}) := \sum_{j=1}^n r_j x_j$. Then, getting a single bit of $\phi^\textbf{r}_i(\textbf{s}) = \sum_{j=1}^n r_j s_j$ for chosen $\textbf{r}$, gives rise to CM-MVHNP for a single-bit function and the secret $\textbf{s}$. A special case is where $\textbf{r}$ is of the form $r_j \cdot \textbf{e}\boldsymbol{_j}=(0,\ldots,0,r_j,0,\ldots,0)$. Then $\phi^\textbf{r}_i(\textbf{s}) = r_js_j$, which gives rise to CM-HNP for a single-bit function and secret $s_j$.

Therefore, if one can find a group for which the condition on the homomorphisms $\phi^\textbf{r}$ holds, then proving the hardness of single bits in this model reduces to either CM-MVHNP or CM-HNP (note that in the latter case one only recovers a component of $s$, and therefore needs other methods for recovering the entire value $s$; for the case in which $s = g^{ab}$ is a Diffie--Hellman key in $\F_{p^m}$ that we describe below, one can use the results involving ``summing functions'' from \cite{Eric-Verheul} and recover the entire secret $\textbf{s}$ from the algorithm that recovers a single (fixed) component $s_i$; for the case of elliptic curves it is sufficient to know one coordinate, as there are at most $3$ values for the other coordinate). We give a brief overview of the known results in the literature.

As mentioned above, this idea was introduced by Boneh and Shparlinski~\cite{BS} for the LSB of (both the $x,y$ coordinates of) Diffie--Hellman keys in elliptic curve groups over prime fields.
It is shown there that changing the Weierstrass equation is an isomorphism that gives rise to the desired multipliers. Indeed, it is well known that twists of the curve give $\phi(x,y) = (u^2x,u^3y)$. Therefore, given a request for desired multiplier $r$ (for example by the SFT algorithm), one can obtain it if there is a solution to $u^d = r$ (where $d=2$ or $d=3$, depends on the coordinate) and flip a coin to guess the bit if a solution does not exist. The work~\cite{BS} uses the same technique as in \cite{ACGS} to prove hardness of LSB. This approach was then applied by \cite{Duc} (see also \cite{Kiltz}) to every single bit of a larger class of elliptic curve secrets, that also includes Diffie--Hellman keys in elliptic curves, using the SFT algorithm (that is, using the solution to CM-HNP for single-bit functions, as in Theorem~\ref{HNP_thm}).

The idea of changing group representations can also be used for finite fields. The works \cite{Fazio,WZZ} consider the computational Diffie--Hellman (CDH) problem in groups $\F_{p^m}^*$ for $m>1$. They show that some polynomial representations of $\F_{p^m}$ give rise to the desired homomorphisms $\phi^\textbf{r}$ for $\textbf{r} = r_j \cdot \textbf{e}\boldsymbol{_j}$, and therefore reduce to CM-HNP.

For a detailed overview of these techniques we refer the reader to the exposition of Sections $5, 5.1, 5.2$ and subsections within of \cite{MVHNP}. This latter work gives applications of the solution for CM-MVHNP to show bit security of the computational Diffie--Hellman problem in groups of higher dimension in models similar to those mentioned above; specifically, for elliptic curves over extension fields, and for $\F_{p^m}^*$ with different (non-polynomial) representations of the field $\F_{p^m}$.

We stress that these models do not tell a lot about the hardness of specific bits in real-life implementations of Diffie--Hellman key exchange, where the representation of the group is fixed. One should interpret results in the representation changing model as follows: assuming hardness of CDH in a group $G$ (where $G$ can be the multiplicative group of a finite extension field or an elliptic curve over a finite field), there is no algorithm that takes $g, g^a,g^b \in G$ and outputs the $i$-th bit of $g^{ab}$ for many representations of $G$ (more precisely, for representations corresponding to the specific isomorphisms used in the reduction). Nevertheless, given an instance $g^a,g^b$ in a specific representation of $G$, this result does not tell us whether it is hard to compute a specific bit of the secret $g^{ab}$. Indeed, this problem is still open.

\subsubsection{Sample-preserving search-to-decision reductions for LWE and LWR}

We assume the reader is familiar with the \emph{search} and \emph{decision} variants of the LWE and LWR problems~\cite{LWR,LWE}. The problem at hand is to reduce the search problem to the decision problem. That is, to show that the decision problem is at least as hard as the search problem. This is done in a similar fashion to the bit security reductions above: one assumes an oracle to the decision problem is available, and uses it to solve the search problem. We explain the reduction and show how the SFT algorithm is used to get a reduction in the stronger ``sample preserving'' model. This is done, as above, by reducing the problem to CM-MVHNP.

We only focus on the part of the reduction which involves the SFT algorithm; the entire reduction is more involved. By a ``hybrid'' argument (see \cite[Theorem 1]{HardCoreSurvey} or \cite[Lemma 3]{LWR_reduction}), one can reduce the decision problem to distinguishing a specific LWE sample, among the set of all samples.\footnote{The reduction given in \cite{MM} uses the duality of the LWE and knapsack functions.} We therefore consider a single LWE sample.

The standard method to show that the decision problem is as hard as the search problem is as follows. Suppose one has a perfect decision oracle. Given an LWE sample $b = \langle \textbf{a},\textbf{s} \rangle + e = a_1s_1 + \ldots + a_ns_n + e \pmod{p}$ one makes a guess $s'$ for $s_1$ and re-randomises the sample as $\textbf{a}' = ( a_1 + r, a_2, \dots, a_n), b' = b + rs' \pmod{p}$. If the guess is correct (i.e., if $s' = s_1$) then $( \textbf{a}', b')$ is a valid LWE sample whereas if the guess is incorrect then $b'$ is uniform and independent of the other smaples.
Hence the decision oracle determines whether the guess $s'$ of the secret value $s_1$ is correct.
After at most $pn$ queries to the decision oracle one can compute the secret $\textbf{s}$.

When the oracle is not perfect one will have to repeat this procedure with different inputs $(\textbf{a},b)$ and follow majority rule. When the success rate of the oracle is low, one may not have enough initial inputs $(\textbf{a},b)$ to satisfactorily apply the majority rule, and therefore would need to draw more samples. A \emph{sample-preserving} reduction is a reduction that uses only the initial given samples, and does not ask for more samples during the procedure. Micciancio and Mol~\cite{MM} used the SFT algorithm to give a sample-preserving search-to-decision reduction for the learning with errors problem. We now explain this reduction.

The standard method above involves choosing a unit vector $\textbf{e}_{j}$ and guessing $\langle \textbf{e}_{j},\textbf{s} \rangle$.
Micciancio and Mol observe that one can choose any vector $\textbf{v}$ and guess $\langle \textbf{v},\textbf{s} \rangle$, then let the decision oracle to advise whether this guess is correct or incorrect.
Again, if the oracle is perfect then one determines the correct guesses, denotes them by $b_\textbf{v}$, and eventually obtains $n$ linear equations in $\textbf{s}$ and hence can solve the problem. However if the oracle is not perfect (but has a noticeable advantage over a random guess), then in the case where the oracle says that the guess for $\langle \textbf{v},\textbf{s} \rangle$ is incorrect (more precisely, that the distribution is uniform), one sets $b_\textbf{v}$ to be some value from the remaining $p-1$ possibilities, chosen uniformly. Then for a selection of chosen vectors $\textbf{v}$ we have the values $b_\textbf{v}$, for which $b_\textbf{v} = \langle \textbf{v},\textbf{s} \rangle$ with some noticeable bias from $\frac{1}{p}$. In other words, we have query access to a noisy version of the function $f( \textbf{v} ) = \langle \textbf{v},\textbf{s} \rangle \pmod{p}$.

This is an instance of CM-MVHNP with an unreliable oracle. The function $\omega_p^{b_\textbf{v}}$, which is a noisy version of $\omega_p^{\langle \textbf{v},\textbf{s} \rangle}$, has a significant coefficient for the character $\chi_\textbf{s}$ (see Section~\ref{sec:e-concentrated}). Thus, one can run the SFT algorithm on the function $\omega_p^{b_\textbf{v}}$, to find this significant coefficient, hence the character, and thus solve this problem.

A very similar approach is taken in \cite{LWR_reduction} for the learning with rounding problem. We remark that in the case of a non-prime $p$ the reduction is more subtle, and requires some restrictions (see~\cite{MM,LWR_reduction} for more details). We also remark that the reduction is an average-case reduction, and does not hold for the worst case (more precisely, there may be a set of initial samples $\{(\textbf{a}_i,b_i)\}$ for which the reduction fails). A sample-preserving reduction for the latter is still an open problem.


\section{Limitations: Non-Linear Problems}
\label{sec_limitations}

This section presents limitations on natural generalisation of the approaches taken above to a larger class of applications. We show that the linearity in the hidden number problem, induced from the operation $s \cdot x$, is essential for the SFT to be useful. In particular, we give an answer (in the negative) to an open question in~\cite{MIHNP}.

The solution to the CM-HNP in $\Z_N$ (Theorem \ref{HNP_thm}) is based on Fourier analysis in the additive group $(\Z_N,+)$ and it exploits the scaling property of the Fourier transform for the function $f_s(x) := f(sx)$. In other words, the function $f_s$ is the composition of $f$ with a linear map on $\Z_N$.
It is natural to consider whether this approach can be used for other algebraic groups (such as elliptic curves and algebraic tori).
The hidden number problem in the case of elliptic curves is to determine a secret point $S \in E( \F_p )$ given samples $( P, f( S + P ))$ where a typical choice for the function would be $f(Q) = \bit_i( x(Q))$.
The natural approach is to still use Fourier analysis in the additive group $(\Z_p,+)$ but instead of composing with a linear map, to compose with a rational function (e.g., coming from the translation map $t_S(P) = P+S$). Another generalisation would be Fourier
analysis in other groups $(G, \cdot)$.

If such tools could be developed we might have an approach to the bit security of Diffie--Hellman key exchange in the group of elliptic curve points in certain models.
There are also other interesting problems that could be approached with Fourier analysis on general groups. For example, the authors of~\cite{MIHNP} raise the question whether it is possible to apply these results to the \emph{modular inversion hidden number problem}.

Unfortunately, there is a major obstacle to applying the SFT algorithm to these sorts of problems.
Namely, if $f$ is a concentrated function then the composition $f \circ \varphi$ is concentrated only when $\varphi$ is affine. In fact, $f \circ \varphi$ has significant coefficients only when $\varphi$ is affine.
The aim of this section is to explain this obstacle.
Since the translation map for the elliptic curve group law is a non-affine rational function, this explains why the method cannot be directly applied to the elliptic curve hidden number problem.
Our argument also answers the question of~\cite{MIHNP} in the negative.

\bigskip

Let $f : G \to \C$ be a function and let $f_s(x) = f \circ \varphi_s (x)$, where $\varphi_s : G \to G$ is an efficiently computable function (that depends on some unknown value $s$). To generalise the proof of Theorem~\ref{HNP_thm} one needs the following three conditions:

\begin{enumerate}
    \item the function $f$ has significant coefficients;
    \item the function $f_s$ has significant coefficients;
    \item there exists a relation between the significant coefficients of $f$ and $f_s$ that allows to determine $s$ (or at least a small set of candidates for $s$).
\end{enumerate}


One special case is when $f$ is a constant function. Then $f_s$ is also a constant function and both conditions 1 and 2 are satisfied. The problem is that a constant function cannot tell us anything about the secret $s$, and so condition 3 does not hold.
Hence, we need to focus on functions that are far from constant, which we formalise in our proof by requiring that $\widehat{f}(0) = 0$ (in other words, $f$ is ``balanced'').

Having dispensed with this special case we focus on the first two conditions.
We first consider the case when $f$ is concentrated.
If $\varphi_s(x) = ax + b$ is affine then we already know from the scaling and time-shifting properties that all Fourier coefficients of $f$ are preserved in $f_s$, and so if $f$ is concentrated then $f_s$ is also concentrated.
Our aim is to show a converse to this fact: if $\varphi_s$ is a rational function and if conditions 1 and 2 both hold then $\varphi_s$ must be affine.
This result is closely related to the Beurling--Helson Theorem \cite{Beur-Hel} (see \cite{KonShk,Lebedev} for related results in $\Z_p$) and the work of Green and Konyagin \cite{GreKon} on the Fourier transform of balanced functions.

For our result we need the following lemma \cite[Lemma 7]{NguShpa} (a proof, for general fields $\F_{p^m}$, can be found in  \cite[Theorem 2]{Moreno}).

\begin{lemma}\label{lem_character_sum}
    Let $q$ be prime. For any polynomials $f,g \in \F_q[x]$ such that the rational function $h = \frac{f}{g}$ is not constant in $\F_q$, the following bound holds
    \[ \Bigg| \sum_{\lambda \in \F_p}{}^{*} \omega_q^{h(\lambda)} \Bigg| \leq (\max\{\deg(f), \deg(g)\} + u -2) \sqrt{q} + \delta \,, \]
    where $\sum^*$ means that the summation is taken over all $\lambda \in \F_q$ which are not poles of $h$ and
    \begin{displaymath}
        (u,\delta) = \left\{
        \begin{array}{l l}
            (v,1)   & \quad if \ \deg(f) \leq \deg(g),\\      (v+1,0) & \quad if \ \deg(f)>\deg(g),
        \end{array}
        \right .
    \end{displaymath}
    and $v$ is the number of distinct zeros of $g$ in the algebraic closure of $\F_q$.
\end{lemma}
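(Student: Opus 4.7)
The strategy is to reduce the claim to the Weil bound for character sums of rational functions on $\mathbb{P}^1$ over $\F_q$, and then extract the exact constants from the pole data of $h = f/g$. Let $\psi: \F_q \to \C^*$ denote the non-trivial additive character $\psi(a) = \omega_q^a$. I would invoke the Weil--Bombieri bound: for any non-constant $H \in \overline{\F}_q(x)$ that is not of the form $r^q - r + c$ with $r \in \overline{\F}_q(x)$, $c \in \overline{\F}_q$, one has
\[
   \bigg|\sum_\lambda{}^* \psi(H(\lambda)) \bigg| \le (D_\infty + N - 2) \sqrt{q} \,,
\]
where $D_\infty$ is the degree of the polar divisor of $H$ on $\mathbb{P}^1$ and $N$ is the number of its distinct poles in $\mathbb{P}^1(\overline{\F}_q)$.

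Next I would compute $D_\infty$ and $N$ for $h = f/g$. After cancelling any common factor (which changes neither $h$ nor the sum $\sum^*_\lambda$), the finite poles of $h$ are exactly the $v$ distinct roots of $g$ in $\overline{\F}_q$ and they contribute $\deg g$ to the polar divisor. There are then two cases at infinity: if $\deg f \le \deg g$, then $h$ is regular at $\infty$, so $N = v$ and $D_\infty = \deg g$; if $\deg f > \deg g$, then $\infty$ is a pole of order $\deg f - \deg g$, so $N = v+1$ and $D_\infty = \deg f$. In either case $D_\infty + N = \max\{\deg f, \deg g\} + u$, producing the main term of the bound.

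The $\delta$ correction then comes from the mismatch between the projectivised Weil--Bombieri estimate (which implicitly counts $\F_q$-points of $\mathbb{P}^1$) and the summation restricted to finite $\lambda \in \F_q$. When $\deg f < \deg g$, the point at infinity is regular with $h(\infty) = 0$, producing an extra $\psi(0) = 1$ absorbed into $\delta = 1$; when $\deg f > \deg g$, infinity is itself excluded from the sum as a pole, so no correction is needed and $\delta = 0$. The borderline case $\deg f = \deg g$ can be normalised into the first sub-case by dividing out leading coefficients.

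Finally, the hypothesis that $h$ is non-constant in $\F_q$ rules out the degenerate form $H = r^q - r + c$ excluded by Weil--Bombieri: by Fermat's little theorem $(r^q - r)(\lambda) = 0$ for every $\lambda \in \F_q$, so any such $h$ would act as the constant $c$ on $\F_q$, contradicting the assumption. The main obstacle will be matching the Weil--Bombieri bound precisely to the stated form, in particular keeping the pole bookkeeping tight under the cancellation step (which can shift both $v$ and the degrees) and handling the tie $\deg f = \deg g$ cleanly; for the full technical details I would defer to the proof of Theorem~2 in Moreno--Moreno~\cite{Moreno}.
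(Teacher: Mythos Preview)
The paper does not give its own proof of this lemma: it is stated as a quotation of \cite[Lemma~7]{NguShpa}, with the full proof deferred to \cite[Theorem~2]{Moreno}. Your proposal does exactly the same thing at bottom---you sketch the Weil--Bombieri reduction and then explicitly defer to Moreno--Moreno for the technical details---so you are aligned with the paper's treatment.

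A couple of small points about the sketch itself. First, in the case $\deg f = \deg g$ you do not get $h(\infty)=0$; you get the ratio of leading coefficients, so ``normalising by dividing out leading coefficients'' does not place you in the $h(\infty)=0$ situation. What actually matters is just that $|\psi(h(\infty))|=1$, which already gives $\delta=1$ without any normalisation. Second, your argument ruling out the Artin--Schreier degeneracy $h=r^q-r+c$ via Fermat's little theorem only works verbatim when $r\in\F_q(x)$; for $r\in\overline{\F}_q(x)$ the value $r(\lambda)$ need not lie in $\F_q$, so $(r^q-r)(\lambda)$ need not vanish. In practice this exclusion is handled by degree considerations (a non-constant $h$ of degree $<q$ cannot be of that form), which is how Moreno--Moreno proceed. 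You already flag both of these as places needing care, so this is not a gap so much as confirmation that the bookkeeping really does need the cited reference.
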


We formulate the following result for functions on $\Z_q$ for a prime $q$, but it can be generalised to finite fields $\F_{p^m}$ with $m > 1$.
Let $g, h \in \Z_{q}[x]$ be polynomials where $h$ is not the constant zero.
Let $Z_h$ be the set of zeroes in $\Z_{q}$ of $h$. We define
$\varphi(x) = g(x)/h(x)$ for all $x \in \Z_q \setminus Z_h$ and $\varphi(x) = 0$ otherwise (since we will assume $Z_h$ is small compared with $q$ it does not matter how we define $\varphi$ on $Z_h$).

Recall that the definition of concentration applies to families of functions. To keep the formulation of the following proposition clean, we call a single function concentrated as explained after the definition above.

\begin{proposition}\label{prop_limit}
    Let $q$ be a sufficiently large prime.
    Let $f$ be a concentrated function on $\Z_q$ such that $\|f\|_2=1$ and $\widehat{f}(0) = 0$.
    Let $g, h \in \Z_q[x]$ be polynomials of degree bounded by $poly(\log(q))$ and let $Z_h$ be the set of zeroes of $h$.
    Define $\varphi(x)$ as above and suppose this function is non-constant.
    Let $\tau = 1/poly(\log(q))$. If $f \circ \varphi$ has any $\tau$-heavy Fourier coefficients then $\varphi(x) = ax + b$ for some $a, b \in \Z_q$.
\end{proposition}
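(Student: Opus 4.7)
The plan is to expand $\widehat{f\circ\varphi}(\alpha)$ against the Fourier basis of $f$ and to bound each resulting rational-function character sum via Lemma~\ref{lem_character_sum}; the only way an expansion term can fail to be small is if the phase $\beta\varphi(x)-\alpha x$ is constant as a rational function, and this is exactly the condition that pins $\varphi$ down to an affine map.

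For each $\alpha \in \Z_q$, I would write
\[
\widehat{f \circ \varphi}(\alpha) = \sum_{\beta \in \Z_q} \widehat{f}(\beta)\, T(\alpha,\beta), \qquad T(\alpha,\beta) := \frac{1}{q}\sum_{x \in \Z_q} \omega_q^{\beta \varphi(x) - \alpha x},
\]
interpreting $\varphi = 0$ on the pole set $Z_h$; since $|Z_h| \leq D := \max(\deg g,\deg h) = \mathrm{poly}(\log q)$, these artificial values contribute only $O(D/q)$ to any $T(\alpha,\beta)$. Since $f$ is concentrated, for $\epsilon = 1/\mathrm{poly}(\log q)$ (to be tuned below) I would choose $\Gamma \subseteq \Z_q$ of size $\mathrm{poly}(\log q)$ with $r := f - f|_\Gamma$ satisfying $\|r\|_2^2 \leq \epsilon$, and drop $0$ from $\Gamma$ using $\widehat{f}(0) = 0$. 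The tail $r \circ \varphi$ I would control by a pullback estimate: every $y \in \Z_q$ has at most $D$ preimages under $\varphi$, so $\|r \circ \varphi\|_2^2 \leq D\epsilon$, and Parseval on $\Z_q$ gives $|\widehat{r\circ\varphi}(\alpha)| \leq \sqrt{D\epsilon}$, which falls below $\sqrt{\tau}/2$ for $\epsilon$ small enough. For the main term the key dichotomy is that either some non-zero $\beta \in \Gamma$ makes $\beta\varphi(x)-\alpha x$ constant over $\F_q$, or Lemma~\ref{lem_character_sum} applied to numerator $\beta g(x) - \alpha x\, h(x)$ and denominator $h(x)$ (each of degree $O(D)$) gives $|T(\alpha,\beta)| = O(D/\sqrt{q})$ for every such $\beta$. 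In the latter case, summing over $\Gamma$ and using $|\widehat{f}(\beta)| \leq \|f\|_2 = 1$ yields
\[
|\widehat{(f|_\Gamma)\circ\varphi}(\alpha)| \leq |\Gamma| \cdot O(D/\sqrt{q}) = O(\mathrm{poly}(\log q)/\sqrt{q}),
\]
which for $q$ sufficiently large is $< \sqrt{\tau}/2$; combined with the tail bound this gives $|\widehat{f\circ\varphi}(\alpha)|^2 < \tau$, contradicting $\tau$-heaviness.

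Hence some non-zero $\beta \in \Gamma$ must satisfy $\beta g(x) = (\alpha x + c) h(x)$ in $\F_q[x]$ for a constant $c$, giving $\varphi(x) = (\alpha/\beta) x + c/\beta$: the required affine form, with non-constancy of $\varphi$ forcing $\alpha \neq 0$. The main obstacle I foresee is the quantitative balancing of parameters: since concentration only guarantees $|\Gamma_\epsilon| \leq P(\log q / \epsilon)$, I must choose $\epsilon$ small enough that $\sqrt{D\epsilon} \ll \sqrt{\tau}$ while keeping $|\Gamma|$ polylogarithmic, and the bookkeeping of degrees, the $u,v,\delta$ terms from Lemma~\ref{lem_character_sum} coming from the pole structure of $h$, and the $Z_h$ artifacts from extending $\varphi$ by $0$ all need care; but the $\mathrm{poly}(\log q)$ bounds on $D$ and $\tau^{-1}$ assumed in the proposition make the balance feasible for all sufficiently large $q$.
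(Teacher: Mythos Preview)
Your proposal is correct and follows the same overall architecture as the paper's proof: expand $\widehat{f\circ\varphi}$ in the Fourier basis of $f$, split into a concentrated part $\Gamma$ and a tail, apply Lemma~\ref{lem_character_sum} to the phases $\beta\varphi(x)-\alpha x$ for $\beta\in\Gamma$, and derive a contradiction unless some phase is constant (forcing $\varphi$ affine).

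The one substantive difference is in how the tail $r=f-f|_\Gamma$ is handled. The paper keeps working in the frequency domain: it applies Cauchy--Schwarz to $\sum_{\alpha\notin\Gamma}\widehat{f}(\alpha)\sum_x\chi(\psi^\beta_\alpha(x))$ and then invokes the Weil bound \emph{again} on every $\alpha\notin\Gamma$, obtaining a bound $\le 2d\sqrt{\epsilon}$. You instead work in the time domain, observing that $\varphi$ has fibres of size $O(D)$ (since $g(x)-yh(x)$ has at most $D$ roots for each $y$, with a harmless extra $|Z_h|$ contribution at $y=0$), so $\|r\circ\varphi\|_2^2\le O(D)\|r\|_2^2\le O(D\epsilon)$, and then a single Parseval gives $|\widehat{r\circ\varphi}(\alpha)|\le O(\sqrt{D\epsilon})$. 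Your route is slightly more elementary --- it only needs Lemma~\ref{lem_character_sum} for the finitely many $\beta\in\Gamma$, not for all $\beta$ --- and it separates the role of the Weil bound (main term) from the role of concentration (tail) more cleanly. The paper's route, on the other hand, avoids the small extra bookkeeping about fibre sizes over $Z_h$. Both give bounds of the same quality, and the parameter balancing you flag at the end is exactly what the paper does by taking $\epsilon=\tau/(32d^2)$.
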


\begin{proof}
    Let $G = \Z_q$ and write $f  = \sum_{\alpha \in G} \widehat{f}(\alpha) \chi_\alpha$.
    Let $d = \max\{ \deg(g(x)), \deg(h(x)) \}$.
    Let $\epsilon = \frac{\tau}{32 d^2}$. Since $f$ is concentrated there is a set $\Gamma$ of size $poly(\log(\abs{G}))$  such that
    \[
        \| f - f|_\Gamma \|_{2}^{2} \leq \epsilon = \frac{\tau}{32 d^2} \,.
    \]
    Since $\widehat{f}(0) = 0$ it follows that $\Gamma$ does not contain zero.

    Now consider $f_\varphi(x) = f(\varphi(x)) = \sum_{\alpha \in G} \widehat{f}(\alpha) \chi_\alpha(\varphi(x))$.  Assume it has a $\tau$-heavy coefficient; for contradiction we suppose $\varphi(x) \ne ax + b$ for any $a, b$. For every $\beta \in G$ we have
    \begin{equation*}
        \begin{split}
            \widehat{f_\varphi}(\beta) = & \frac{1}{|G|} \sum_{x \in G} f_\varphi(x) \overline{\chi_\beta(x)} = \frac{1}{|G|} \sum_{x \in G} f(\varphi(x)) \overline{\chi_\beta(x)} = \\
            & \frac{1}{|G|} \sum_{x \in G} \sum_{\alpha \in G} \widehat{f}(\alpha) \chi_\alpha(\varphi(x)) \overline{\chi_\beta(x)} = \frac{1}{|G|} \sum_{\alpha \in G} \widehat{f}(\alpha) \sum_{x \in G} \chi_\alpha(\varphi(x)) \overline{\chi_\beta(x)} = \\
            & \frac{1}{|G|} \sum_{\alpha \in G} \widehat{f}(\alpha) \sum_{x \in G} \chi_1(\alpha \varphi(x) - \beta x) = \frac{1}{|G|} \sum_{\alpha \in G} \widehat{f}(\alpha) \sum_{x \in G} \chi_1(\psi^\beta_\alpha(x)) \, ,
        \end{split}
    \end{equation*}
    where we denote $\psi^\beta_\alpha(x) = \alpha \varphi(x) - \beta x$.
    Since $\widehat{f}(0) = 0$ we can ignore the case $\alpha = 0$ and by our supposition that $\varphi \ne ax+b$ we know that there are no $\alpha, \beta$ such that $\psi^\beta_\alpha$ is constant.
    Hence, the last sum is a character sum satisfying the conditions of Lemma~\ref{lem_character_sum}.
    Furthermore, $\psi^\beta_\alpha = (\alpha g(x) - \beta x h(x))/h(x)$ and so the value $u$ in Lemma~\ref{lem_character_sum} is bounded by $\max\left\{\deg(g), \deg(h)\right\} \leq d$.
    Applying Lemma \ref{lem_character_sum}, we get that for every $\alpha \ne 0$ and every $\beta$ it holds that $|\sum_{x \in G \setminus Z_h} \chi(\psi^\beta_\alpha(x))| \leq C$ where $C = 2 d \sqrt{q}$.

    Now note that
    \[ \widehat{f_\varphi}(\beta) = \frac{1}{|G|} \sum_{\alpha \in G} \widehat{f}(\alpha) \sum_{x \in Z_h} \chi_1(\psi^\beta_\alpha(x)) + \frac{1}{|G|} \sum_{\alpha \in \Gamma} \widehat{f}(\alpha) \sum_{x \in G \setminus Z_h} \chi_1(\psi^\beta_\alpha(x)) + \frac{1}{|G|} \sum_{\alpha \notin \Gamma} \widehat{f}(\alpha) \sum_{x \in G \setminus Z_h} \chi_1(\psi^\beta_\alpha(x)) \, .
    \]
    For the first term we note that $|\sum_{x \in Z_h} \chi_1(\psi^\beta_\alpha(x))| \le | Z_h| \le d$ and that $\Vert f \Vert_2 = 1$ implies $\sum_{\alpha \in G} |\widehat{f}(\alpha)| \le \sqrt{|G|} = \sqrt{q}$ and $|\widehat{f}(\alpha)| \leq 1$ for all $\alpha$.
    Therefore
    \[ \left| \widehat{f_\varphi}(\beta) \right| \leq \frac{d}{\sqrt{q}} + \left| \frac{1}{|G|} \sum_{\alpha \in \Gamma} \widehat{f}(\alpha) \sum_{x \in G \setminus Z_h} \chi(\psi^\beta_\alpha(x)) \right| + \left| \frac{1}{|G|} \sum_{\alpha \notin \Gamma} \widehat{f}(\alpha) \sum_{x \in G \setminus Z_h} \chi(\psi^\beta_\alpha(x)) \right| . \]

    We apply the triangle inequality on the first sum and the Cauchy--Schwarz inequality on the second. Let $k = |\Gamma|$ and write $\Gamma = \{ \alpha_1, \dots, \alpha_k \}$.
    Then using Lemma \ref{lem_character_sum} we get
    \begin{equation*}
        \begin{split}
            \left| \frac{1}{|G|} \sum_{\alpha \in \Gamma} \widehat{f}(\alpha) \sum_{x \in G \setminus Z_h} \chi(\psi^\beta_\alpha(x)) \right| & = \left| \frac{1}{|G|} \sum_{j=1}^{k} \widehat{f}(\alpha_j) \sum_{x \in G \setminus Z_h} \chi(\psi^\beta_{\alpha_j}(x)) \right| \leq \left| \frac{1}{q} \sum_{j=1}^{k} \widehat{f}(\alpha_j) \cdot C \right| \\
            & \leq \frac{1}{q} \sum_{j=1}^{k} \left| \widehat{f}(\alpha_j) \right| C = \frac{2k d}{\sqrt{q}} \, .
        \end{split}
    \end{equation*}
    Since $k = | \Gamma | = poly( \log(q))$ we have that this bound (similarly for the earlier bound $d/\sqrt{q}$) is negligible, so we have for example
    \[
        \frac{d}{\sqrt{q}} + \frac{2kd}{\sqrt{q}} < 2 d \sqrt{\epsilon} \,.
    \]

    From Parseval's identity $\sum_{\alpha \notin \Gamma} \left| \widehat{f}(\alpha) \right|^2 = \|f - f|_\Gamma \|_{2}^{2} \leq \epsilon$. Therefore, by the Cauchy--Schwarz inequality we have
    \[
        \left| \frac{1}{|G|} \sum_{\alpha \notin \Gamma} \widehat{f}(\alpha) \sum_{x \in G \setminus Z_h} \chi(\psi^\beta_\alpha(x)) \right|
        \leq
        \frac{1}{|G|} \left( \sum_{\alpha \notin \Gamma} \left| \widehat{f}(\alpha) \right|^2 \right)^\frac{1}{2} \left( \sum_{\alpha \notin \Gamma} \left| \sum_{x \in G \setminus Z_h} \chi(\psi^\beta_\alpha(x)) \right|^2 \right)^\frac{1}{2} \leq \frac{1}{|G|} \sqrt{\epsilon} \left(\sum_{\alpha \notin \Gamma} C^2  \right)^\frac{1}{2} .
    \]
    Then
    \[\left| \frac{1}{|G|} \sum_{\alpha \notin \Gamma} \widehat{f}(\alpha) \sum_{x \in G \setminus Z_h} \chi(\psi^\beta_\alpha(x)) \right| \leq \frac{\sqrt{\epsilon} \sqrt{q-k} 2 d \sqrt{q}}{q} \leq 2 d \sqrt{\epsilon} \,. \]

    Finally, combining the bounds we get
    \[ \left| \widehat{f_\varphi}(\beta) \right|^2 \leq \left(\frac{d}{\sqrt{q}} + \frac{2kd}{\sqrt{q}} + 2 d \sqrt{\epsilon}\right)^2 < \left(4 d \sqrt{\epsilon}\right)^2 = \left(4d \frac{\sqrt{\tau}}{4 d \sqrt{2}}\right)^2 = \frac{\tau}{2} \, . \]
    Therefore, for every $\beta$ the coefficient $\widehat{f_\varphi}(\beta)$ is not $\tau$-heavy for any noticeable $\tau$. 
    This gives the required contradiction and so we conclude that
    $\varphi$ is affine.
\end{proof}

\subsection{$\epsilon$-concentrated functions}\label{sec:e-concentrated}

Proposition~\ref{prop_limit} shows that if $f$ is concentrated (and far from constant) and $f \circ \varphi$ has significant coefficients, then $\varphi$ is affine.
It is natural to wonder whether the condition that $f$ is concentrated is necessary.
In fact, the result cannot be weakened in general: if $\varphi(x) = g(x)/h(x)$ is non-affine and invertible almost everywhere (such as a M{\" o}bius function $\varphi(x) = (ax + b)/(cx + d)$ where $ad - bc = 1$) then $f(x) = \chi_\alpha(x) + \chi_\beta(\varphi^{-1}(x))$ is such that $f(x)$ has a significant coefficient at $\alpha$ and $f \circ \varphi$ has a significant coefficient at $\beta$.

However, a version of Proposition~\ref{prop_limit} is true for some non-concentrated functions of interest. Since Theorem~\ref{HNP_thm} does not require the function to be concentrated, it is of interest to also show that composing with non-affine $\varphi(x)$ is an obstruction to the solution to CM-HNP for these functions as well.
Hence, for the rest of this section we consider a `noisy character', $f(x) := \omega_N^{\alpha x + e(x)}$.
We first show that these functions have a significant coefficient, then we show that $f \circ \varphi$ does not have a significant coefficient when $\varphi$ is not affine.

To formalise the problem we think of $e(x)$ as a random variable from some distribution (e.g., a discrete Gaussian or a uniform distribution on some small interval compared with $N$).
%
We treat $e(x)$ as being independent of $x$, in which case we can write
\[
    \widehat{f}(\beta)=\E\left(\omega_N^{\alpha x -\beta x + e(x)}\right)=\E\left(\omega_N^{( \alpha - \beta) x }\omega_N^{e(x)}\right)= \E\left(\omega_N^{( \alpha - \beta)x}\right)\E\left(\omega_N^{e(x)}\right) .
\]
To show that $|\widehat{f}(\alpha)|$ is large it suffices to give a lower bound for $ \big| \E\big(\omega_N^{e(x)}\big) \big|$.
We do this by following an argument due to Bleichenbacher~\cite{Bleichenbacher}.

Bleichenbacher defines the \emph{bias} of a random variable $X$ on $\Z$ as
\[
    B_N(X) = \E\left( \exp( 2\pi i X / N ) \right) .
\]
Assume $X$ is the uniform distribution in some interval $[0,T-1]$ for some $0<T\leq N$.
Then
\[ B^U_N(X) := B_N(X)  = \frac{1}{T}\sum_{0\leq x<T}\exp( 2\pi i x / N ) \,. \]
Some properties of $B^U_N(X)$ appear in Lemma $1$ of \cite{De-Mulder}. Since the latter is a geometrical progression,
\[ B^U_N(X) = \frac{1}{T}\frac{\sin(\pi T /N)}{\sin(\pi N)} \,. \]

Suppose $e(x)$ follows the uniform distribution $X$. That is, for each $x \in \Z_N$ the value $e(x)$ is chosen uniformly and independently at random in $[0,T-1]$. From linearity it is easy to see that
\[ \E\left(\omega_N^{e(x)}\right) = \frac{1}{N}\sum_{x\in\Z_N}\left(\exp( 2\pi i e(x) / N)\right) = \frac{N/T}{N}\sum_{0\leq t<T}\exp( 2\pi i t / N )  = \frac{1}{T}\sum_{0\leq t<T}\exp( 2\pi i t / N ) =B^U_N(X) \,. \]

It is obvious that if $T=N$ then $B^U_N(X)=0$. In applications $e(x)$ usually represents some given bits, and so it is natural to restrict $T\leq N/2$ as we do, though the following argument also holds given a ``fraction of a bit'', i.e. for $T>N/2$. For $T\leq N/2$ one has\footnote{See \cite[Table 1]{De-Mulder} for some values $|B^U_N(X)|$ for different $T\leq N/2$.} $|B^U_N(X)|>0.5$, and so $\big|\E\big(\omega_N^{e(x)}\big)\big|^2=|B^U_N(X)|^2>0.25$. The desired lower bound is provided.



A similar approach holds when $e$ follows a Gaussian distribution. In this case the size of the bias is even larger, as $e(x)=0$ on a large set (and $e(x)$ is small on an even larger set) and so most of the ``energy'' is distributed around zero.

Hence, we have established that a noisy character has a significant coefficient.
Finally, we address the result of Proposition~\ref{prop_limit} for such a function.

\begin{claim}
    Let $\varphi$ be as in Proposition \ref{prop_limit}, and let $e(x)$ given by the uniform distribution (over some interval in $\Z_N$) or by a Gaussian distribution. If $f_{\varphi}(x):= \omega_N^{\varphi(x) + e(x)}$ has a significant coefficient then $\varphi(x) = ax + b$ for some $a, b \in \Z_N$.
\end{claim}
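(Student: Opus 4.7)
The plan is to mirror the structure of Proposition~\ref{prop_limit}, replacing the hypothesis that $f$ is concentrated by the i.i.d.\ structure of the noise $e$. The goal is to show that when $\varphi$ is not affine, every Fourier coefficient of $f_\varphi$ is small with overwhelming probability over $e$, so $f_\varphi$ cannot carry a $\tau$-heavy coefficient for any $\tau^{-1}=\operatorname{poly}(\log N)$.

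First I expand, extending $f_\varphi$ by zero on the at most $d$ poles $Z_h$ of $h$,
\[
\widehat{f_\varphi}(\beta)\;=\;\frac{1}{N}\sum_{x\notin Z_h}\omega_N^{\varphi(x)-\beta x}\,\omega_N^{e(x)},
\]
and take expectation over the independent samples $\{e(x)\}_{x\in\Z_N}$. Because these are i.i.d., $\E[\omega_N^{e(x)}]=B$ is the same Bleichenbacher bias for every $x$, giving
\[
\E_e\bigl[\widehat{f_\varphi}(\beta)\bigr]\;=\;\frac{B}{N}\sum_{x\notin Z_h}\omega_N^{\varphi(x)-\beta x}.
\]
If $\varphi$ were not affine then $\varphi(x)-\beta x=(g(x)-\beta x\,h(x))/h(x)$ is non-constant as a rational function for every $\beta$: a constant value $c$ would force $g(x)=(\beta x+c)h(x)$, hence $\varphi(x)=\beta x+c$. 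Lemma~\ref{lem_character_sum} then yields $\bigl|\sum_{x\notin Z_h}\omega_N^{\varphi(x)-\beta x}\bigr|\le O(d\sqrt{N})$ and hence $\bigl|\E_e[\widehat{f_\varphi}(\beta)]\bigr|\le O(d/\sqrt{N})$, which is negligible since $d=\operatorname{poly}(\log N)$.

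The remaining step is to pass from the expectation to a pointwise bound. Since $\widehat{f_\varphi}(\beta)$ is an average of at most $N$ independent complex random variables of modulus one, a Hoeffding-type inequality (splitting real and imaginary parts) yields
\[
\Pr_e\!\left[\bigl|\widehat{f_\varphi}(\beta)-\E_e[\widehat{f_\varphi}(\beta)]\bigr|>t\right]\;\le\;4\exp(-Nt^2/4).
\]
Setting $t=\sqrt{\tau}/2$ makes the per-$\beta$ failure probability at most $4\exp(-N\tau/16)$. For $\tau^{-1}=\operatorname{poly}(\log N)$ and $N$ large one has $O(d/\sqrt{N})<\sqrt{\tau}/2$, so a union bound over the $N$ frequencies shows that with probability at least $1-4N\exp(-N\tau/16)\to 1$ no coefficient satisfies $|\widehat{f_\varphi}(\beta)|^2>\tau$. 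Contrapositively, if $f_\varphi$ has a significant coefficient with noticeable probability over $e$, then $\varphi$ must be affine.

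The main obstacle I anticipate is the concentration step: one has to ensure the Hoeffding-type bound is sharp enough to survive the union bound over all $N$ frequencies simultaneously for every polynomially-inverse threshold. The Gaussian noise case is handled identically, once one verifies the analogous lower bound on $|B|$ for a sufficiently concentrated Gaussian (so the ``signal'' analysis for the affine case is recovered, while the ``noise'' analysis above carries through unchanged because only boundedness of the summands is used).
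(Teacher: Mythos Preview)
Your argument is correct and reaches the same conclusion via the same key tool (Lemma~\ref{lem_character_sum}), but the route differs from the paper's in one respect worth noting. The paper does not separate ``expectation over $e$'' from ``concentration around that expectation.'' Instead, invoking the standing heuristic stated just before the claim (``we treat $e(x)$ as being independent of $x$''), it writes the Fourier coefficient itself as a single expectation over $x$ and factors it directly:
\[
\widehat{f_\varphi}(\beta)=\E\bigl(\omega_N^{\psi^\beta_1(x)}\omega_N^{e(x)}\bigr)=\E\bigl(\omega_N^{\psi^\beta_1(x)}\bigr)\,\E\bigl(\omega_N^{e(x)}\bigr),
\]
then bounds the first factor by Lemma~\ref{lem_character_sum} and the second by $1$. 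This is a two-line proof, but it leans on that independence heuristic as an exact identity. Your version is longer but more honest: you recognise that the factorisation is only valid after taking $\E_e$, and you add a Hoeffding step plus a union bound to pass from $\E_e[\widehat{f_\varphi}(\beta)]$ being small to $\widehat{f_\varphi}(\beta)$ itself being small with overwhelming probability. What you gain is a rigorous high-probability statement over the randomness of $e$; what the paper gains is brevity, at the cost of treating the heuristic as literal. The character-sum estimate and the affine-vs-non-affine dichotomy are identical in both.
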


\begin{proof}[Proof sketch]
    We observe that for every $\beta$
    \[ \widehat{f_{\varphi}}(\beta)=\E\left(\omega_N^{\varphi(x)-\beta x + e(x)}\right)=\E\left(\omega_N^{\psi^\beta_1(x)}\omega_N^{e(x)}\right)= \E\left(\omega_N^{\psi^\beta_1(x)}\right)\E\left(\omega_N^{e(x)}\right) , \]
    where $\psi^\beta_1(x) = \varphi(x) - \beta x$.
    Since $\Big|\E\left(\omega_N^{e(x)}\right)\Big| \leq 1$, it suffices to upper-bound
    $\Big|\E\left(\omega_N^{\psi^\beta_1(x)}\right)\Big| $.
    Such a bound follows from Lemma \ref{lem_character_sum} in the same way as in the proof of Proposition~\ref{prop_limit}.
\end{proof}


\subsection{Hidden number problem in subgroups}
\label{sec_subgroups}

Another limitation on the applications of the SFT algorithm is the following. Suppose that the multipliers in the hidden number problem are drawn from some set $H \subseteq G$. One can consider the multipliers to be in a proper subgroup $H < G$, as done in \cite{GVS, SW}. It is not clear how to apply the SFT algorithm to solve this variant of the (chosen-multiplier) hidden number problem. Specifically, the chosen queries in the algorithm have to be correlated, but it is not guaranteed that these correlated queries will all lie in the same subgroup.
If the index $[G:H]$ is small (e.g., $[G:H] = 2$, as in the case of the set of squares in $\F_p^*$) then the issue can be managed, but if $[G:H]$ is large then no results are known.
Therefore, for results (on Diffie--Hellman related schemes) that rely on advice of the form of discrete logarithms to some base $g$ (as in \cite{AKA-HNP,BV2,SW2}), if $g$ generates a relatively small subgroup, it is not guaranteed that the desired correlated multipliers are indeed in the group generated by $g$. This restricts, for example, the result given in \cite[Section 5]{AKA-HNP}. This observation is similar to the one in \cite[Section 2.5]{Shparlinski}, and was handled in \cite[Section 5]{BS} and \cite[4.1]{Duc} since the set of squares in $\F^*_p$ has index 2 in $\F^*_p$.

\section*{Acknowledgements}

We thank Ben Green for providing some insights and references.
We also thank two anonymous referees for their helpful comments on an earlier version of the paper.

\bibliographystyle{tocplain}   

\end{document}